\definecolor{shadecolor}{gray}{0.6}
\Crefname{algocf}{Algorithm}{Algorithms}
\crefname{algocfline}{line}{lines}
\Crefname{invariant}{Invariant}{Invariants}
\Crefname{claim}{Claim}{Claims}
\Crefname{subclaim}{Subclaim}{Subclaims}
\newcommand{\sketch}[1]{}
\newtheorem{thm}{Theorem}[section]
\newtheorem{lem}[thm]{Lemma}
\newtheorem{claim}[thm]{Claim}
\newtheorem{defn}[thm]{Definition}
\newtheorem{fact}[thm]{Fact}
\def \RR   {{\mathbb R}}
\newcommand*\mc[1]{\mathcal{#1}}
\newcommand{\cI}{\mathcal{I}}
\newcommand{\cM}{\mathcal{M}}
\newcommand{\cP}{\mathcal{P}}
\DeclarePairedDelimiter\abs{\lvert}{\rvert}
\DeclareMathOperator{\rank}{rank}
\DeclareMathOperator{\InvExp}{\text{InvExp}}
\DeclareMathOperator{\Span}{\text{span}}
\DeclareMathOperator{\circuit}{\text{circuit}}
\DeclareMathOperator{\Market}{\mathbb{M}}
\newcommand{\sse}{\subseteq}
\newcommand{\opt}[1]{#1^*}
\newcommand{\alg}[1]{\widehat{#1}}
\newcommand{\alert}[1]{{\color{red}#1}\marginpar{$\star\star$}}
\newcommand{\agnote}[1]{\textcolor{red}{AG:#1}\marginpar{$\star\star$}}
\definecolor{debianred}{rgb}{0.84, 0.04, 0.33}
\definecolor{purple}{rgb}{0.59, 0.44, 0.84}
\newcommand{\Ind}{I^*}
\newcommand{\Left}{L}
\newcommand{\Right}{R}
\date{}
\title{Maintaining Matroid Intersections Online}
\author{Niv Buchbinder\thanks{Dept.\ of Statistics and Operations Research, Tel Aviv University, Israel. Supported in part by Israel Science Foundation grant 2233/19 and United States - Israel Binational Science Foundation grant 2022418.} \and Anupam Gupta\thanks{Carnegie Mellon University and Google Research. This research was partly supported by NSF awards CCF-1955785, CCF-2006953, and CCF-2224718. } \and Daniel Hathcock\thanks{Department of Mathematical Sciences, Carnegie Mellon University, Pittsburgh PA 15213. Supported by the NSF Graduate Research Fellowship grant DGE-2140739.} \and Anna R. Karlin \thanks{University of Washington, Seattle, WA 98195. This research was supported in part by NSF award CCF-1813135 and Air Force Office of Scientific Research grant FA9550-20-1-0212.} \and Sherry Sarkar\thanks{Department of Mathematical Sciences, Carnegie Mellon University, Pittsburgh PA 15213.}}
\date{\today}
\begin{document}

\maketitle

\begin{abstract}

Maintaining a maximum bipartite matching online while minimizing recourse/augmentations is a well studied problem, motivated by content delivery, job scheduling, and hashing. A breakthrough result of Bernstein, Holm, and Rotenberg (\emph{SODA 2018}) resolved this problem up to a logarithmic factors. However, we may need a richer class of combinatorial constraints (e.g., matroid constraints) to model other problems in scheduling and resource allocation.

We consider the problem of maintaining a maximum independent set of an arbitrary matroid $\mathcal{M}$ and a partition matroid $\mathcal{P}$ in the online setting. Specifically, at each timestep $t$ one part $P_t$ of the partition matroid (i.e., a subset of elements) is revealed: we must now select at most one of these newly-revealed elements, but can exchange some of the previously selected elements for new ones from previous parts, to maintain a maximum independent set on the elements seen thus far. The goal is to minimize the number of augmentations/changes done by our algorithm. If $\mathcal{M}$ is also a partition matroid, we recover the problem of maintaining a maximum bipartite matching online with recourse as a special case. In our work, we allow arbitrary matroids $\mathcal{M}$, and so we can model broader classes of problems.

Our main result is an $O(n \log^2 n)$-competitive algorithm, where $n$ is the rank of the largest common base; this matches the current best quantitative bound for the bipartite matching special case. Our result builds substantively on the breakthrough result of Bernstein, Holm, and Rotenberg for maintaining bipartite matchings: a key contribution of our work is to make connections to market equilibria and prices, and our use of properties of these equilibria in submodular utility allocation markets to prove our bound on the number of augmentations.
\end{abstract}

\newpage
\setcounter{page}{1}

\section{Introduction}
\label{sec:introduction}

In the \emph{Online Matroid Intersection Maintenance Problem with
  recourse}, we want to maintain a maximum independent set in the
intersection of an arbitrary matroid $\mathcal{M}$ and a partition
matroid $\mathcal{P}$ in the online setting. Specifically, suppose the
partition matroid is given by a partition $(P_1, P_2, \ldots, P_\ell)$
of the element set $E$, and $\cM$ is another matroid on
$E$.~\footnote{A \emph{matroid} $\cM$ over a ground set $E$ is given
  by a downward-closed collection of independent sets
  $\cI \subseteq 2^E$ such that for any $A, B \in \cI$ with $|A|<|B|$
  there exists an element $e \in B\setminus A$ such that
  $A\cup\{e\} \in \cI$~\cite[\S39]{schrijver2003volB}.}  Both of these
matroids are initially unknown to us. Now at each timestep $t$, we
have a current maximum independent set $I_{t-1}$ and the next part
$P_t$ of the partition matroid is revealed.

Since we need to maintain a maximum independent set in the
intersection of $\mc{M}$ and the portion of the partition matroid seen
so far, we may need to perform some augmentations---i.e., we may need
to drop elements from $I_{t-1}$ and add elements from
$E\setminus I_{t-1}$ to the current independent
set. 
Our objective is to minimize the total number of reassignments (i.e.,
additions or deletions from the independent set) over the course of
the arrival of all parts of $\mc{P}$. 

A special case of our problem that has been considered
extensively~\cite{GroveKKV95,BosekLSZ14,BosekLSZ18,BosekLSZ22,BHR18}
is that of the online \emph{bipartite matching} problem with
recourse. This setting corresponds to the matroid $\mc{M}$ also being
a partition matroid. In turn, it allows us to identify the elements
with edges of a bipartite graph whose vertices are the parts in the
two partitions. Hence each timestep corresponds to a new vertex from one side of the graph
arriving, along with its incident edges. 

In order to maintain a maximum matching, we need to augment along
alternating paths, which corresponds to dropping or adding edges. (If
individual edges arrive one-by-one rather than vertices, nothing better than $\Omega(n^2)$
total cost is possible in the worst case. For example, in the instance where edges arrive on alternating ends of a path, the augmentation must be the entire path at each step~\cite[\S1]{BHR18}.)

In addition to its natural and combinatorial appeal, the generality of
the online matroid intersection maintenance problem allows us to model
problems in resource allocation and scheduling beyond the matching case: 
\begin{itemize}
\item \emph{Laminar matroids} generalize the bipartite matching
  setting to allowing constraints on a hierarchy of ``groups'': e.g.,
  suppose clients arrive online and need to be matched to a server
  from their desired subset. However, we may have restrictions on the
  number of clients assigned to servers on the same server rack, or
  the same data center (due to cooling, power, and bandwidth
  constraints). These can be captured by laminar matroids, where we
  are given capacities on a family of laminar sets. Laminar
  restrictions are common when considering such job scheduling
  problems with restricted (hierarchical) resources.
  
\item A different setting is that of \emph{matroid
    partitioning}~\cite{Edmonds65}: the elements of a single matroid $\cM$
  arrive over time, and need to be partitioned among $k$ color
  classes, so that each color class is an independent set in $\cM$.
  The goal is to minimize the number of color changes. In this
  setting, the underlying matroid constraint captures the scheduling
  constraints of a single server cluster (e.g., like in the laminar
  case above, or the examples below), the coloring captures the
  idea of partitioning the jobs among these clusters, and the recourse
  bound ensures that only a few jobs are reassigned between
  clusters. The matroid coloring problem can be modeled using online
  matroid intersection by the simple idea of ``lifting'' elements to
  (element, color) pairs. 

\item \emph{Transversal matroids} are a useful matroid constraint for
  modern schedulers, since they can model
  \emph{coflows}~\cite{ChowdhuryS12} (which are tasks that can be
  jointly processed on a computing resource, e.g., can be shuffled in
  parallel on a MapReduce cluster~\cite{ImMPP19}).  In transversal
  matroids the elements are \emph{nodes} on one side of a bipartite
  graph, and independent sets correspond to matchable subsets of
  nodes. Combining these with the matroid partitioning idea allows us to
  partition a collection of jobs among
  clusters. In particular, if we now have several different computing resources (clusters), and jobs arriving online, the scheduler algorithm needs to decide which cluster to choose for each job in order to process it, and the goal is to minimize the number of jobs that have to be switched between clusters.

\item If we have a routing problem instead of a scheduling
  one~\cite{JahanjouKR17}, we can use \emph{gammoids} instead of
  transversal matroids: these capture sets of nodes which admit
  vertex-disjoint flows to a sink; again the goal would be to
  minimize reroutings.
\end{itemize}

\subsection{Our Result and Techniques}
\label{sec:main-result}

The augmenting-path algorithm for matroid
intersection~\cite{AignerDowling71,Lawler75} (see also
\cite[\S41.2]{schrijver2003volB}) immediately bounds the total number
of reassignments by $O(n^2)$, where $n$ is the rank of the maximum
independent set in $\cP \cap \cM$. To our knowledge, nothing better
was known about this problem in general prior to the current
paper. Our main result is the
following: 

\begin{restatable}[Main Theorem]{thm}{MainThm}
  \label{thm:main-result}
  The \emph{Shortest Augmenting Path} (SAP) algorithm for the
  \emph{Online Matroid Intersection Maintenance} problem results in at
  most $O(n \log^2 n)$ total reassignments, where $n$ is the rank of the
  intersection $\mc{M} \cap \cP$. 
\end{restatable}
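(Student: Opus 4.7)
The plan is to adapt the level-based amortized analysis of Bernstein--Holm--Rotenberg (BHR18) from bipartite matchings to matroid intersection, with the novel ingredient being a price/potential function derived from the equilibrium of an associated submodular utility allocation market. The key conceptual insight is that, in the matching case, the ``levels'' assigned to vertices by BHR18's BFS-layered SAP argument can be reinterpreted as equilibrium prices in a linear Fisher market, and this reinterpretation generalizes to matroid intersection via submodular utility markets whose utilities are inherited from the rank function of $\mc{M}$.

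First, I would fix the SAP algorithm precisely: at timestep $t$, given the current maximum common independent set $I_{t-1}$ and the new part $P_t$, one considers the exchange graph on $E_{\leq t}$ (with directed arcs encoding $\cM$- and $\cP$-exchanges relative to $I_{t-1}$), adds an arbitrary element $e_t\in P_t$ as a candidate source, and either discovers that no augmenting path exists (so $I_t=I_{t-1}$) or augments along the shortest such path to obtain $I_t$. The cost of timestep $t$ is exactly the length of this path.

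Second, I would define a price function $p_t:E\to\RR_{\geq 0}$ as follows. Associate a submodular utility market whose buyers are the parts of $\cP$ revealed so far, whose goods are the elements of $E_{\leq t}$, and whose utility functions are derived from the matroid $\cM$ (e.g.\ $u_j(S)=\rank_\cM(S)$ restricted appropriately); the key property of the resulting equilibrium prices is that $\sum_e p_t(e)$ equals (up to constants) the rank of the current common independent set, and, more importantly, that the equilibrium exhibits a layered structure analogous to BHR18's BFS levels when one computes the equilibrium via a max-flow / successive-shortest-paths procedure. I would show that shortest augmenting paths in the matroid exchange graph correspond to level-monotone paths in this market-induced structure, so that an augmenting path of length $k$ forces the ``level'' of at least $k$ elements to strictly increase.

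Third, I would establish monotonicity: as new parts $P_t$ arrive, the equilibrium prices on existing elements only increase (this is the submodular-market analogue of the fact that adding a buyer only raises prices in a Fisher market). Combined with the charging argument above, the total length of all augmentations is bounded by the total price increase $\sum_e(p_\infty(e)-p_0(e))$. The final step is a structural bound, in the spirit of BHR18, showing that the total price mass summed over all elements never exceeds $O(n\log^2 n)$; this uses a doubling/geometric-scale argument on price levels, where each ``scale'' contributes at most $O(n\log n)$ to the potential by an exchange-lemma / rank-argument inside $\cM$.

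The main obstacle will be the third step: proving that the equilibrium prices in the submodular utility market behave well under online arrivals and that their total budget is $O(n\log^2 n)$. Specifically, while the existence of equilibria in such markets is classical, exhibiting the precise layered structure that mirrors BHR18's BFS argument --- and arguing that shortest augmenting paths in the matroid exchange graph respect these layers --- requires genuinely new structural results about the interaction between matroid exchange axioms and market equilibria. A secondary difficulty is handling the possibility that $e_t\in P_t$ is not the ``best'' choice from the new part: one must argue that the layered structure is robust to the choice of source, or, alternatively, that SAP can freely choose among elements of $P_t$ without affecting the amortized bound.
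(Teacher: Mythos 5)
Your high-level plan is the same as the paper's: define a submodular utility allocation market from $\cM$ and the arrived parts, use Eisenberg--Gale equilibrium prices as a potential, prove monotonicity of prices under arrivals, and amortize SAP's cost against price increases. However, there are two substantive problems in your sketch.

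First, the charging scheme in your second step is quantitatively wrong. You claim an augmenting path of length $k$ forces the ``level'' of at least $k$ elements to strictly increase, and then in the third step you bound total augmentation length by the total price increase $\sum_e\bigl(p_\infty(e)-p_0(e)\bigr)$. But the total price mass is at most $n$ (this is Theorem~\ref{thm:comb-decomp}\ref{prop:MI-5} in the paper: a max-price basis has total price equal to the number of arrived buyers), so this would give an $O(n)$ recourse bound --- contradicting the $\Omega(n\log n)$ lower bound of Grove et al.\ even for bipartite matching. The correct intermediate fact is the Expansion Lemma: if the cheapest element $e^*$ in the new part has price $p_{e^*}<1$, there is an augmenting path of length $O\bigl(\tfrac{\ln n}{1-p_{e^*}}\bigr)$, proved by showing that a BFS from $e^*$ that is not yet free must expand geometrically at rate $\approx 1/p_{e^*}$ (using buyers exhausting their budgets on cheapest items together with a rank/circuit argument inside $\cM$). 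The final charge is then $\ell_i \lesssim \ln n \cdot \sum_e \Delta p_e^{(i)}/(1-p_e)$, and integrating $\int_0^{1-1/n}\tfrac{dp}{1-p}$ is what yields the second $\log n$ factor --- not a ``doubling of price levels.''

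Second, and more fundamentally, you explicitly flag as open the step that is the paper's actual contribution: extracting a layered/expansion structure from the market equilibrium that interacts correctly with matroid exchange. In the matroid case the Eisenberg--Gale duals live on \emph{sets}, not elements, so the per-element ``prices'' are not unique and monotonicity can fail for arbitrary optimal duals. The paper fixes this by constructing a canonical dual via a \emph{matroid intersection skeleton}: repeatedly peel off the subset of buyers with maximum inverse expansion $m(B')/\rank_\cM(N(B'))$, contract the span of their neighborhood, and assign that value as the price of every element contracted at that step. Proving that these canonical prices are optimal duals, that they are monotone under buyer arrival (via a gluing lemma that decomposes the market along a contraction), and that they control the expansion of BFS levels in the exchange graph (via a rank argument with circuits) is the new structural work. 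Without some concrete construction of this kind, your outline is a restatement of the goal rather than a proof.

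A smaller note: your ``secondary difficulty'' about the choice of source $e_t\in P_t$ is not actually a difficulty in the paper's framework. SAP searches from all of $P_t$ at once, and the Expansion Lemma is stated with respect to $q_{\min}=\min_{e\in P_t}p_e$, so the source-selection question dissolves.
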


Our analysis reveals a perhaps surprising connection between matroid
intersection maintenance and the well-known theory of market
equilibrium prices. Indeed, we define a market with the arriving parts
viewed as buyers who want to get one item/element from their desired subset.
Viewing the items as being divisible allows us to find a market equilibrium,
where the price of each item gives us crucial information about the
length of the shortest augmenting path starting with that item. Note that the
algorithm is purely combinatorial; the market helps us expose the
properties of the underlying instance, and to argue about the
algorithm's performance.

The bound of $O(n \log^2 n)$ matches that given by the breakthrough
paper of Bernstein, Holm and Rotenberg~\cite{BHR18} for the special
case of online matchings; this is not a coincidence. Indeed, our first
step is to reinterpret their work in the language of market
equilibria; we then develop the machinery and connections to reason
about general matroid intersection markets and obtain our results. We now elaborate
on these connections and our techniques.

\paragraph{Bipartite Matching} In the special case of online bipartite matching, we imagine the
vertices (``clients'') of one side of a bipartite graph, along with
the induced edges to the other side (``servers''), arriving online.
\cite{BHR18} define a notion of \emph{server necessity} to capture how
much of each server $s$ is needed to match all clients. They compute
server necessities via ``balanced flows'' (which is the solution to a
certain convex program), and also via an intuitive combinatorial
decomposition (also called a ``matching skeleton'' in other works) 
that builds on Hall's Theorem. These ingredients give an {\em expansion
  lemma} bounding the length of the shortest augmenting path for a new
client in terms of the minimum server necessity among its
neighbors.

We begin our investigation by showing in \S\ref{sec:matchings} how the
language and machinery of market equilibria yield a concise and
appealing version---though unchanged in its fundamentals---of
Bernstein et al's proof for the online bipartite matching problem with
recourse.  Our starting point is a reinterpretation of the concept of
server necessity as \emph{prices in a market equilibrium}. A Fisher
market~\cite{Brainard-Scarf} consists of $n$ buyers and $m$ divisible
items: each buyer $i$ arrives with a budget of money $m_i$ and a
utility function that specifies buyers' utilities for each possible
bundle of goods. A \emph{market equilibrium} is a set of prices
$p_1, \ldots, p_m$, where $p_j$ is the price of item $j$, such that
each buyer spends their money on a utility-maximizing bundle, the
supply precisely equals the demand, and the market clears (i.e., each
good with positive price is sold and each buyer spends all their
money). If we view clients in the online matching problem as buyers
each having one dollar, and the servers as items, with each buyer
having equal and linear utility for each item in their bipartite graph
neighborhood (and zero utility for non-neighbors), the market clearing
prices turn out to be precisely the server necessities. The
equilibrium allocation and prices can be computed using the
Eisenberg-Gale convex program~\cite{EG59} (which differs from the
convex program used in~\cite{BHR18}).

\paragraph{Matroid Intersection} We then show how this market equilibrium perspective allows us to
generalize to online \emph{matroid intersection} with recourse, where
we have a more general set of feasibility conditions on the
allocations. Again, we start with $n$ buyers and $m$ items, but the
items are now the elements $E$ of matroid $\mc{M}$, and the buyers are
interested in disjoint elements. (In the online matching problem,
these elements of the matroid are edges incident to the buyer/online
vertex, and hence map to a set of offline vertices that buyer wants to
match to.)  Each buyer again arrives with some money $m_i$ and a
linear utility function over the items in its part. In this market,
the allocation of items to buyers must lie in the matroid polytope of
$\mc{M}$. This market again is an Eisenberg-Gale market~\cite{JV07}, for which a
market equilibrium exists and can be be computed with a

convex program of the form:
\[ \max \bigg\{ \sum_i m_i \log \sum_{e \in P_i} y_e \biggm| \sum_{e \in S} y_e \leq \rank_{\mc{M}}(S)
  ~~\forall S \subseteq E, y \geq 0 \bigg\}. \] The dual variables to
the program again yield equilibrium prices for each item.

Using these ideas, we extend the ideas from
\S\ref{sec:matchings} to the general online matroid intersection
problem in \S\ref{sec:general-matroids}.

Our setting requires new ideas beyond the case of matchings because
the convex program is richer: the prices (i.e., duals) are now on sets
and not on elements. There is a natural way to translate from sets to
elements: the price of each element is the sum of prices of sets
containing it---but then the prices are not unique, and we can no
longer argue the monotonicity of prices as new clients arrive, a crucial
ingredient in~\cite{BHR18} and in \S\ref{sec:matchings}.  To address this, 
we first make a connection to submodular utility allocation
markets~\cite{JV10} to show that prices seen by buyers are
monotone. Then we show a decomposition theorem for matroids (extending
such a result for matchings \cite{GoelKK12,BHR18}) that allows us to
define unique and consistent prices for elements, and to show
monotonicity of all individual element prices over arrivals.

In conjunction with this, we can define for any element $e$ a
collection of nested sets, showing that if there are no short
augmenting paths (in the natural exchange graph) starting at this
element, then these nested sets grow exponentially at rate
$\approx (1/p_e)$. With this, we bound the length of paths by
$\approx \frac{\ln n}{1-p_e}$ (of course, the paths are never of
length more than $n$.) Finally, the monotonicity of element prices allows
us to distribute this augmentation cost to the price increase of the
elements participating in this augmenting path.  Since the element
prices lie in $[0,1]$, the cost charged to each element (over the
entire run of the algorithm) is
\[ \int_{p = 0}^{1} \min \bigg\{ n, \frac{\ln n}{1-p} \bigg\} \, dp
  \leq \int_{p = 0}^{1-\nicefrac{1}{n}} \frac{\ln n}{1-p} \, dp + \int_{p = 1-\nicefrac{1}{n}}^{1} n \, dp = O(\ln^2
  n). \] 
  A technical detail is that summing this over all elements would
give us $|E| \log^2 n$, and not something that depends on the
rank. This issue can be handled using a convexity-based argument.

\paragraph{Paper Outline} In the remainder of the paper, we first
illustrate the basic ideas of our arguments in \S\ref{sec:matchings} for the setting of
bipartite matchings. In \S\ref{sec:general-matroids} we give details for the general case of maintaining matroid intersections. Finally, we close with some remarks and future directions in \S\ref{sec:closing}.

\subsection{Related work}

To our knowledge, online matroid intersection maintenance with
recourse has not been studied previously. The special case of online
bipartite matching problem with recourse was defined by
\cite{GroveKKV95}, who gave an $\Omega(n \log n)$ lower
bound. \cite{ChaudhuriDKL09} gave optimal algorithms with
$O(n \log n)$ recourse when clients arrive in random order, or when
the graph is a forest. For the case of forests,
\cite{BosekLSZ18,BosekLSZ22} studied the shortest augmenting path
algorithm and showed it to also be optimal. The first breakthrough on
the general case of matching maintenance was by \cite{BosekLSZ14} who gave a
$O(n^{1.5})$ recourse bound; eventually \cite{BHR18} gave the current
best $O(n \log^2 n)$ bound.

The problem of load-balancing with recourse is closely related:
\cite{AGZ99,PW98,Wes00} show how to allocate jobs to machines and
maintain near-optimum load while reassigning $O(\log n)$ jobs per
timestep. \cite{AzarBK94,AKPPW93,AzarNR92} show results for dynamic
settings without reassignments, and observe strong lower bounds. 
\cite{GKS14-matching} show how to allocate unit jobs to machines in a
restricted machines setting to maintain a load of $(1+\varepsilon)L$
with $O(1/\varepsilon)$ recourse; they give results for a dynamic flow
variant. Recently similar results were given by \cite{KLS22} for the
case of unrelated machines, with logarithmic recourse. Very recently \cite{BBLS23} studied a more general setting in which covering-packing constraints arrive and depart online and should be satisfied upon arrival. This setting captures as a special case a fully dynamic fractional load balancing/matching problem in which jobs arrive and depart online. They obtained an $O(\log (n/\varepsilon)/\varepsilon)$-competitive algorithm when the algorithm is given a $(1+\varepsilon)$ resource augmentation.

Several works \cite{BHK09, EFN23} have also modelled recourse in online matching with \textit{buybacks} or \textit{cancellations}. In these settings, there is instead a penalty for recourse; for every online vertex that is matched the algorithm earns money, but the algorithm may choose to ``buy back" resources from offline vertices, incurring a penalty. The buyback setting has also been extended beyond matchings to matroid and matroid intersection constraints \cite{AK09, BHK09, Varadaraja11}.

There is an enormous body of work on online bipartite matching
problems {\em without} recourse, starting with the seminal work of
Karp, Vazirani and Vazirani~\cite{KVV90}. In these settings, the
algorithm makes irrevocable decisions, and the goal is to maximize the
size/weight of the matchings; see, e.g.,~\cite{Mehta13,EIV23}. An
extension of this to matroid intersection was studied by
\cite{GuruganeshS17}, who considered two arbitrary matroids defined on
the same ground set whose elements arrive one at a time in a random
order, and must be irrevocably picked/discarded, to maximize the size
of the independent set selected. Another large body of work studies
the min-weight perfect matching problem (mostly in metric settings);
see,
e.g.,~\cite{MeyersonNP06,BBGN14,R18,PS21}. 
The techniques in these works are orthogonal to ours.

Our combinatorial decomposition for matroids produces a \emph{matroid intersection skeleton}
extending that for
matching; this decomposition for matching was studied by~\cite{BHR18},
and previously, under the name of \emph{matching skeletons}
by~\cite{GoelKK12,LeeS17} with the goal of understanding streaming algorithms
for matchings. The matching skeleton was also used to derive an optimal competitive ratio in the batch arrival model of online bipartite matching \cite{FN20, FNS21}. To the best of our knowledge, the extension to matroids
has not been studied before; making further connections to streaming
algorithms for matroid intersection remains an interesting future
direction. 

As discussed above, our work makes a connection to and builds on basic
results on market equilibria~\cite{AD54,EG59,Brainard-Scarf}. 
Market equilibria and especially the design of algorithms for computing these
equilibria have been the subject of intense study by the algorithmic game theory
community over the last two decades. For an introduction to the topic, see chapters 5 and 6 of ~\cite{NRTV07}. 
Of particular relevance to us is the 
work of Jain and Vazirani~\cite{JV07} on Eisenberg-Gale markets~\cite{EG59}.

Convex programming techniques, and in particular the Eisenberg-Gale ``fair allocation"
convex program have also been used to guide combinatorial algorithms
before, e.g., in the context of flow-time
scheduling~\cite{ImKM18,GGKS19}. However, these prior works do not
consider the cost of recourse; they use the convex program to directly
schedule jobs. We instead use it to compute prices and show the
existence of short augmenting paths.

Note that while the problem we study can viewed as a dynamic graph problem, 
the cost function we study (bounding recourse) is unrelated to the kinds of cost functions studied in the dynamic graph algorithms literature.

\section{Maintaining Matchings via Markets}
\label{sec:matchings}

We now present our market equilibria-based perspective for the
bipartite matching case; we build on this for general matroids in
\S\ref{sec:general-matroids}.  For matchings, the adversary fixes a
bipartite graph $(B,T,E)$ with $n$ buyers $B$ and $m$ items $T$.  The
vertices in $T$ (the offline side) are known up-front, whereas the
vertices in $B$ (and the edges between them) are revealed online (we can assume that the maximum matching after $i$ arrivals
  has size $i$, and hence the maximum matching has size $n = |B|$; this is without loss of generality, see~\cite[Obs.~9]{BHR18}). We
see the edges between the $i^{th}$ buyer (also called $i$) and its
neighbors $N(i)$ only at time $i$.

If $M_{i-1}$ is the maximum matching maintained by the algorithm after
seeing $i-1$ vertices, and buyer $i$ arrives, the \emph{shortest
  augmenting path} algorithm (SAP) finds an (arbitrary) shortest
augmenting path from $i$ to a free item (if such a path exists), and
augments the matching $M_{i-1}$ along this path to get $M_i$. 

Let $\ell_i$ denote the length of this shortest augmenting path found
by the algorithm, and the goal is to bound the worst-case value of
$\sum_{i=1}^n \ell_i$. There exist instances for which
$\sum_{i-1}^n \ell_i = \Omega (n \log n)$~\cite[Thm.~1]{GroveKKV95};
the following result of \cite{BHR18}---which we prove using market
equilibria in this section---matches this lower bound up to a logarithmic factor. 

\begin{restatable}{thm}{BHRThm}
  \label{thm:BHR}
  The \emph{Shortest Augmenting Path} (SAP) algorithm performs  $O(n \log^2 n)$  changes. 
\end{restatable}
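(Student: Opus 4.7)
The plan is to analyze SAP by associating a sequence of linear Fisher markets to the online process and using the equilibrium prices of these markets as a proxy for BHR's ``server necessities''. At time $i$, after buyer $i$ has arrived, I form the Fisher market with buyers $\{1,\dots,i\}$ (unit budget each), items $T$ (unit supply each), and utilities $u_{bs}=\mathbf{1}[s\in N(b)]$, and let $p^{(i)}_s$ denote the equilibrium prices from the Eisenberg--Gale convex program. The KKT conditions then yield $p_s^{(i)}\in[0,1]$, $\sum_s p_s^{(i)} = i$, and the bang-per-buck property: every buyer $b$ spends its whole budget only on items achieving $\min_{s\in N(b)} p_s^{(i)}$. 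A preliminary technical step is \emph{competition monotonicity}, $p_s^{(i)} \ge p_s^{(i-1)}$ for all $s$, which I would prove by a swap/perturbation argument on the Eisenberg--Gale program (or alternatively by the SUA-market viewpoint mentioned in the introduction).

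The core of the proof is an \emph{expansion lemma}: when buyer $i$ arrives, the SAP from $i$ has length $O(\log n / (1-q_i))$, where $q_i := \min_{s\in N(i)} p_s^{(i)}$. To prove it, I first read off a \emph{matching skeleton} from the equilibrium---the items partition into blocks $T_1, T_2, \dots$ of successive necessities $q_1 < q_2 < \cdots$, where each $T_\ell$ together with its buyer block $B_\ell$ is Hall-tight of density $1/q_\ell$. Then I would argue that the alternating BFS from $i$ stays within the sub-skeleton $T_1 \cup \cdots \cup T_\ell$ containing $i$'s cheapest neighbor (any cheaper item outside would already belong to an earlier BFS layer), and invoke Hall-tightness of the blocks to show every odd BFS layer has size at least $1/q_i$ times the previous one. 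Since the total number of items is at most $n$, the BFS must terminate at a free item after $O(\log_{1/q_i} n) = O(\log n / (1-q_i))$ layers.

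The final step is a charging argument. Distribute $\ell_i$ among the items on the step-$i$ augmenting path, charging each participating item $s$ an amount $O(\log n / (1-p_s^{(i)}))$. Monotonicity of $p_s^{(\cdot)}$ on $[0,1]$, together with the fact that consecutive charges to $s$ force its price to advance sufficiently (by the expansion-lemma mechanism applied to later arrivals), bounds the total charge to $s$ across the whole run by
\[
  \int_{0}^{1-1/n} \frac{\log n}{1-p}\,dp + \int_{1-1/n}^{1} n\,dp \;=\; O(\log^{2} n).
\]
Summing over the $O(n)$ items that ever carry positive price (using $\sum_s p_s^{(i)} \le n$ throughout, and that only matched items can appear on augmenting paths) gives $\sum_i \ell_i = O(n \log^2 n)$.

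The main obstacle is the expansion lemma: extracting the matching skeleton cleanly from the equilibrium prices and verifying that SAP respects the sub-skeleton structure requires some care, though this is essentially the combinatorial content of~\cite{BHR18} transposed into market-equilibrium language. Once the per-step bound is in place, monotonicity and the charging integral are comparatively routine, though one must still check that successive charges to the same item correspond to prices that advance sufficiently to keep the integral finite.
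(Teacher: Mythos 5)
Your overall architecture matches the paper's: recast BHR's server necessities as Eisenberg–Gale equilibrium prices, establish monotonicity of prices (the paper's Lemma~\ref{lem:monotone}), prove an expansion lemma bounding the SAP length by $O(\ln n / (1-q_{\min}(i)))$ (the paper's Lemma~\ref{lem:expansion}), and then amortize over price movement. The expansion-lemma route you sketch (read off a matching skeleton, show BFS layers grow geometrically within the relevant sub-skeleton) is a slightly different presentation of the same content: the paper instead directly defines $\Right_1 = \{j^*\}$, $\Left_k = \{i : M^\star(i)\in \Right_k\}$, $\Right_{k+1} = \{j : y_{ij} > 0, i \in \Left_k\}$ and proves $|\Right_{k+1}| \geq (1/p_{j^*})^k$ by a budget-counting argument, without explicitly constructing the skeleton. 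Both should work; the paper's is more self-contained since it avoids formalizing the skeleton in the matching case.

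The genuine gap is in your charging argument. You propose to charge $\ell_i$ to the items \emph{on the augmenting path}, then argue that ``consecutive charges to $s$ force its price to advance sufficiently.'' There is no such mechanism, and it is not true that an item's equilibrium price increases just because it appears on a shortest augmenting path. The paper charges differently and more globally: by \Cref{thm:ME}(iii), the total price mass $\sum_j p_j$ equals the number of matched buyers, so each arrival raises it by \emph{exactly} $1$; by \Cref{lem:monotone}, only items with $p_j(i)\geq q_{\min}(i)$ can see an increase. Hence $\sum_{j} \Delta p_j(i) = 1$ with all increasing items satisfying $1 - p_j(i) \leq 1 - q_{\min}(i)$, giving
\[
  \ell_i = O\!\left(\frac{\ln n}{1 - q_{\min}(i)}\right)
          = O\!\left(\frac{\ln n}{1 - q_{\min}(i)}\right)\sum_j \Delta p_j(i)
          \leq O(\ln n)\sum_j \frac{\Delta p_j(i)}{1 - p_j(i)}.
\]
The cost of step $i$ is thus distributed across \emph{all} items whose prices rose, not the items on the path. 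Summing over $i$ and swapping the order, monotonicity plus the constraint $\sum_j p_j \leq n$ (with individual prices either $1$ or at most $1-1/n$, \Cref{fact:boundedprices}) caps the per-item contribution by $1 + \int_0^{1-1/n} \frac{dp}{1-p}$, and a convexity argument bounds the whole double sum by $O(n\log n)$, yielding $O(n\log^2 n)$. Your proposal needs this global accounting identity $\sum_j \Delta p_j(i)=1$ to replace the unsubstantiated ``prices on the path advance'' claim; without it, you cannot keep the integral finite or limit how often an item is charged.
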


\subsection{Preliminaries on the Fisher Market}
\label{sec:preliminaries}

In the \emph{Fisher's linear model}~\cite{Brainard-Scarf}, we have $n$
buyers and $m$ divisible goods. Each buyer $i$ has a \emph{budget}
$m_i$. The utility functions are linear: buyer $i$ derives a utility
$u_{ij} y_{ij}$ out of being allocated an amount $y_{ij}$ of good
$j$. 
There exist ``market-clearing'' prices for the goods and a
corresponding equilibrium allocation of the goods to buyers in which
(i) each good with a positive price is fully sold; (ii) buyers are only allocated goods
that maximize their utility-per-price, sometimes called
``bang-per-buck" (if a good has price 0, no buyer has positive
utility for it, so we assume that each item has an interested buyer); and (iii) each buyer
spends their entire budget.  

Eisenberg and Gale~\cite{EG59} showed how to compute the market
equilibrium allocation and prices in the Fisher model using the
following convex program.
\begin{equation}\tag{EG1}\label{eq:EG1}
\begin{aligned}
	\max \quad & \textstyle \sum_{i=1}^n m_i \ln \big(\sum_{j=1}^{m} u_{ij} y_{ij} \big) & \\
\text{s.t.,} \quad 	& \textstyle \sum_{i=1}^{n} y_{ij} \le 1&\forall j=1,2, \ldots, m\\
	& y_{ij}\geq 0. &
\end{aligned}	
\end{equation}

Let $\{y_{ij}\}$ be the optimal solution to the convex program. The Lagrangian dual variable $p_j$ for each item $j$ in the program can be interpreted as the price of that item. We imagine that the budget of a buyer when it arrives is 1 (and 0 before it arrives) and the utilities are $u_{ij} = 1$ for all edges $(i,j) \in E$ and $0$ otherwise.
The following properties can be derived from the KKT optimality conditions (see, e.g., \cite[Chapter~5]{NRTV07}). 

\begin{thm}
  \label{thm:ME}
  Let $\{p_j\}_{j=1}^m$ be an optimal dual solution to (\ref{eq:EG1}). Then the following hold:
  \begin{enumerate}[nosep,label=(\roman*)]
      \item All items are fully allocated:  $\sum_{i =1}^{n}y_{ij}=1$ for all $j$.  \label{prop1}  
 \item Buyers are buying the cheapest price items: $y_{ij}>0 \Rightarrow p_j = \min_{j' \in N(i)} p_{j'}$. \label{prop2}
 \item Each buyer $i$ spends all of their money: $\sum_{j=1}^{m}p_j \cdot y_{ij} = 1$, and hence  $\sum_{j=1}^{m}p_j = n$.  \label{prop3}
 \end{enumerate}
\end{thm}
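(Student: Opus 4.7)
The plan is to derive all three properties as direct consequences of the KKT optimality conditions for the concave program \eqref{eq:EG1}, specialized to our setting where the budget $m_i$ of each present buyer equals $1$ and $u_{ij}=1$ exactly when $(i,j)\in E$. Since the feasible region is a polytope and the objective is concave, strong duality holds and any optimal primal solution $\{y_{ij}\}$ is characterized by the existence of multipliers $p_j \ge 0$ (for the supply constraint on good $j$) and $\lambda_{ij} \ge 0$ (for nonnegativity) satisfying stationarity and complementary slackness. All three properties will then be read off from these conditions.

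First I would differentiate the Lagrangian, which yields the stationarity condition
\[
\frac{m_i \, u_{ij}}{\sum_{k=1}^m u_{ik}\, y_{ik}} \;=\; p_j - \lambda_{ij},
\]
together with the complementary slackness conditions $\lambda_{ij}\, y_{ij} = 0$ and $p_j\bigl(\sum_i y_{ij}-1\bigr)=0$. Specialized to the matching setting, this simplifies for every edge $(i,j)\in E$ to $p_j \ge 1 / \sum_{k\in N(i)} y_{ik}$, with equality whenever $y_{ij}>0$. This immediately yields \ref{prop2}: every item that buyer $i$ purchases in positive amount has the same price, and that common value is exactly $\min_{j'\in N(i)} p_{j'}$. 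To obtain \ref{prop3}, I would multiply the stationarity identity by $y_{ij}$, use $\lambda_{ij}\, y_{ij}=0$ to drop the slack term, and sum over $j$; the left-hand side collapses to $m_i$, giving $\sum_j p_j\, y_{ij}=m_i$. Summing over $i$ and invoking \ref{prop1} gives $\sum_j p_j = \sum_i m_i = n$.

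For \ref{prop1}, the remaining task is to show $p_j>0$ for every good $j$ that has an interested present buyer, after which complementary slackness $p_j(\sum_i y_{ij}-1)=0$ forces the supply constraint to be tight. If instead $p_j=0$ while some buyer $i$ with $m_i>0$ has $u_{ij}=1$, stationarity at $(i,j)$ combined with $\lambda_{ij}\ge 0$ would force $m_i / \sum_k u_{ik}\, y_{ik}\le 0$, which is impossible once the denominator is finite and positive.

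The only technical point needing care, and the most likely obstacle, is justifying that KKT applies cleanly despite the logarithm in the objective. I would handle this by observing that at any optimum every present buyer satisfies $\sum_k u_{ik}\, y_{ik}>0$ (otherwise the objective is $-\infty$, which is suboptimal as long as some feasible allocation gives every present buyer strictly positive utility, which follows from the no-isolated-arrivals assumption). Thus the objective is differentiable on the relevant open set, while the linear supply and nonnegativity constraints trivially satisfy a Slater condition, so strong duality and the KKT characterization apply and the three derivations above are valid.
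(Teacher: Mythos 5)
Your proof is correct and follows exactly the route the paper intends: the paper itself gives no written argument, remarking only that the properties ``can be derived from the KKT optimality conditions,'' and your derivation (stationarity of the Lagrangian, complementary slackness for the supply and nonnegativity constraints, multiplying by $y_{ij}$ and summing, and the positivity-of-prices argument for property (i)) is the standard one from the cited reference. Your closing remark handling the $-\infty$ case of the logarithm and verifying Slater's condition is a sound way to justify that KKT characterizes optimality here.
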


Thus, with unit utilities each buyer $i$ buys only the lowest-price items
from $N(i)$; we refer to this lowest price for buyer $i$ as $q_i$. Let
us prove an additional important property of market-clearing prices
for this utility function.

\begin{restatable}{lem}{monotone}
  \label{lem:monotone}
  Let $i$ be the $i^{th}$ buyer to arrives and let $p, p'$ be the
  equilibrium prices before and after it is added. Let
  $q := \min_{j\in N(i)}\{p_j\}$. Then \vspace{-0.2cm}
  \begin{equation}
    \label{newprices}	
    \begin{array}{ll}
      p'_j =p_j & \text{if }p_j < q \\
      p'_j \ge p_j &  \text{if }p_j \ge q.
    \end{array}	 
  \end{equation}
\end{restatable}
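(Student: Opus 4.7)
The plan is to decompose the old market around the threshold price $q$, observe that the newly arriving buyer $i$ lives entirely on one side of this decomposition, and splice old and new sub-equilibria together. Set $C := \{j \in T : p_j < q\}$ and $B_C := \{k \in [i-1] : N(k) \cap C \neq \emptyset\}$. By \cref{thm:ME}(ii), each $k \in B_C$ has $q_k \le \min_{j \in N(k) \cap C} p_j < q$ and therefore only buys items in $C$; conversely, no buyer outside $B_C$ has a min-price neighbor in $C$, so items in $C$ are purchased exclusively by $B_C$. Since every positively priced item is fully sold (\cref{thm:ME}(i)) and every budget is spent (\cref{thm:ME}(iii)), the restriction of the old equilibrium to $(B_C, C)$ is a valid Fisher equilibrium on its own, and likewise for $([i-1] \setminus B_C,\, T \setminus C)$. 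Finally, $q = \min_{j \in N(i)} p_j$ forces $N(i) \cap C = \emptyset$, so $i \notin B_C$: the new buyer sits entirely in the ``expensive'' sub-market.

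This suggests the following candidate for the new equilibrium: keep the old equilibrium on $(B_C, C)$, compute a fresh equilibrium with prices $\tilde{p}$ on the sub-market $([i] \setminus B_C,\, T \setminus C)$, and set $p''_j := p_j$ for $j \in C$ and $p''_j := \tilde{p}_j$ for $j \in T \setminus C$, with the corresponding glued allocation. I would verify the conditions of \cref{thm:ME} for the full new market: primal feasibility and (iii) hold by construction, as does (i) provided every item has positive price. The only nontrivial condition is (ii) for buyers $k \in B_C$, who might be tempted by items in $T \setminus C$. It suffices that $\tilde{p}_j \ge q$ for every $j \in (T \setminus C) \cap N(k)$ and $k \in B_C$, because then $q_k < q \le \tilde{p}_j$ keeps the old purchases min-price. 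Given this, uniqueness of Fisher equilibrium prices forces $p' = p''$, which is exactly both cases of the lemma: $p'_j = p_j$ on $C$, and $p'_j = \tilde{p}_j \ge p_j$ on $T \setminus C$ (by the sub-market monotonicity invoked below).

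The main remaining step is therefore a weak price-monotonicity statement for the smaller sub-market $([i] \setminus B_C,\, T \setminus C)$: its old prices are already $\ge q$, and I need that adding buyer $i$ does not drop any of these prices below its previous value. I plan to establish this along standard ``more demand $\Rightarrow$ higher prices'' lines. One route is via the Eisenberg-Gale dual of \cref{eq:EG1}: appending the concave term $m_i \ln(\sum_{j \in N(i)} y_{ij})$ to the primal can only tighten the dual constraints, and a KKT/perturbation argument then shows that each optimal $p_j$ moves weakly upward. A more combinatorial route is to use the characterization of equilibrium prices for unit-utility Fisher markets as maxima of densities $|B'|/|S|$ over Hall-type subsets (the matching-skeleton structure that the paper revisits for matroids in later sections); since adding a buyer only weakly increases every such density, every price weakly increases. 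This monotonicity step is where I expect the technical work to concentrate, but both routes are short and standard.
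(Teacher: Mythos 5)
Your decomposition around the threshold $q$, the observation that the new buyer lives entirely in the ``expensive'' half, and the plan to glue the old cheap sub-equilibrium with a new expensive sub-equilibrium are exactly the structure of the paper's proof (with $C = T_{<q}$, $B_C = $ complement of $B_{\ge q}$, and your condition-(ii) check for $k \in B_C$ being the content of the paper's remark that it suffices to show prices in the expensive sub-market do not decrease). The careful KKT verification of the glued solution is a fine way to spell out what the paper leaves implicit.

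However, you have deferred the \emph{entire} substance of the lemma --- weak price monotonicity on the sub-market $([i] \setminus B_C,\, T \setminus C)$ after adding buyer $i$ --- to two sketched ``routes,'' and neither, as stated, is adequate. Route (a) (``appending the concave term can only tighten the dual constraints'') does not by itself imply that each dual price moves weakly upward; adding a term to the objective changes both the optimal primal and the active constraint structure, and extracting a coordinate-wise dual monotonicity from a generic perturbation argument is precisely the thing that needs a proof, not a thing that falls out of KKT for free. Route (b) notes correctly that adding a buyer weakly increases the density $|B'|/|N(B')|$ for every $B'$ containing $i$, hence the top-level (maximum) density cannot drop; but the matching-skeleton prices at subsequent levels are densities of \emph{residual} instances after peeling off and deleting the earlier levels, and it is not immediate that these residual densities also rise --- peeling off a larger level-one set could remove more items from later levels. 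Making this inductive argument rigorous is essentially Theorem 4.10 and Lemma 4.11 of the paper, which for the matroid case requires Jain--Vazirani's competition monotonicity plus the gluing lemma; it is not ``short and standard.''

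The paper's actual argument for this step is a direct counting contradiction you should reconstruct: suppose some prices strictly drop, let $T_<$ be those items and $B_<$ the (old) buyers of them. Full allocation and budget constraints give $\sum_{j \in T_<} p_j \le |B_<|$. After the change, every buyer in $B_<$ has all its cheapest items inside $T_<$ (items in $T_<$ got strictly cheaper while items outside got no cheaper than their old prices, which were already $\ge q_\ell$), so these buyers spend their entire budgets in $T_<$, forcing $\sum_{j \in T_<} p'_j \ge |B_<| \ge \sum_{j \in T_<} p_j > \sum_{j \in T_<} p'_j$ --- a contradiction. Without this (or an equivalent completed argument), your proposal is a reduction to the lemma rather than a proof of it.
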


\begin{proof}
 Let $B_{\geq q}$ (resp.\ $T_{\geq q}$) be the set of buyers that
  buy at price at least $q$ (resp.\ the set of items whose
  price is at least $q$) in the market equilibrium immediately
  prior to the arrival of buyer $i$. No buyer in $B_{\ge q}$ has an
  edge to an item in $T_{<q}$, since it would buy such an item
  otherwise. So if we find a market equilibrium for the subproblem
  consisting of buyers in $B_{\ge q}\cup \{i\}$ and show that all
  prices do not decrease in that equilibrium, we will be done: that
  equilibrium together with the equilibrium for $B_{< q}$ satisfies
  KKT conditions and hence is the new equilibrium.

Hence, let us restrict our attention to buyers in
  $B_{\ge q} \cup \{i\}$ and their neighbors $T_{\ge q}$. Towards a
  contradiction, let $T_< := \{j \mid p'_j<p_j\}$ be the subset of
  items whose price decreases in the new market equilibrium. Let
  $B_< := \{\ell \mid y_{\ell j}>0, j\in T_<\}$ be the buyers who buy
  items from $T_<$. Since each such buyer expends all their budget
  and all goods are completely sold, we have
  $\sum_{j\in T_<} p_j \leq |B_<|$. However, after the price update
  \emph{all} the minimum price items for buyers in $B_<$ must be in
  $T_<$. Moreover, these item prices strictly decreased, and other
  item prices may increase or remain the same. Thus, it must be that
  $\sum_{j\in T_<}p_j >\sum_{j\in T_<}p'_j \geq |B_<|$, where the
  second inequality holds since the buyers now spend all their budgets
  on items from $T_<$. This is a contradiction.
\end{proof}

\subsection{The Expansion Lemma}

\newcommand{\Match}{M^\star}

A \emph{tail augmenting path} is an alternating path that starts from
an arbitrary item, alternates between matched and unmatched edges, and
ends in a free item. The \emph{length} of a (tail) augmenting path is
the number of items on the path.  The following key lemma relates the
length of these paths to the item prices.

\begin{lem}[Expansion lemma]
  \label{lem:expansion}
  Let $(B,T,E)$ be a bipartite graph, let $\Match$ be an arbitrary
  matching, and let $p$ be the market clearing prices. Then, for any
  item $j^*$ with $p_{j^*}\in [0,1)$, there is a tail augmenting path from
  $j^*$ whose length is $O\big(\frac{\ln n}{1-p_{j^*}}\big)$.
\end{lem}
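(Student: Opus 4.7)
The plan is to find an augmenting path within the ``low-price'' subgraph $G^* := G[B_\rho \cup T_\rho]$, where $\rho := p_{j^*}$, $T_\rho := \{j : p_j \leq \rho\}$, and $B_\rho := \{i : q_i \leq \rho\}$. The key structural observation is that, since $q_i = \min_{j' \in N(i)} p_{j'}$, any buyer matched in $\Match$ (or purchasing in the market equilibrium) to an item in $T_\rho$ must lie in $B_\rho$; hence $T_\rho$-items interact only with $B_\rho$-buyers. Combined with Theorem \ref{thm:ME}, every buyer in $B_\rho$ spends its full unit budget on items in $T_\rho$, so $\sum_{j \in T_\rho} p_j = |B_\rho|$; since $p_j \leq \rho$ on $T_\rho$, this gives $|T_\rho| \geq |B_\rho|/\rho > |B_\rho|$ when $\rho < 1$, so $T_\rho$ must contain a free item of $\Match$.

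Next I would run alternating BFS from $j^*$ inside $G^*$: set $L_0 := \{j^*\}$, and inductively let $L_{2k-1}$ be the newly discovered matched partners (in $\Match$) of $L_{2k-2}$, and $L_{2k}$ the newly discovered $G^*$-neighbors of $L_{2k-1}$. Let $I_k := \bigcup_{k' \leq k} L_{2k'}$ and $B_k := \bigcup_{k' \leq k} L_{2k'-1}$, and halt at the first layer $L_{2k}$ containing a free item, yielding a tail augmenting path with $k+1$ items. The crux is the expansion bound: suppose BFS has not halted by layer $2K$, so every item of $I_K$ is matched and $|B_{K+1}| = |I_K|$. By BFS construction every $G^*$-neighbor of $B_{K+1}$ lies in $I_{K+1}$, so each buyer $i \in B_{K+1} \subseteq B_\rho$ spends its full unit budget on items in $N^{G^*}(i) \subseteq I_{K+1}$, and a money-accounting over $B_{K+1}$ gives
\[
|I_K| \,=\, |B_{K+1}| \,\leq\, \sum_{j \in I_{K+1}} p_j \,\leq\, \rho \cdot |I_{K+1}|,
\]
so $|I_{K+1}| \geq |I_K|/\rho$. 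Iterating from $|I_0| = 1$ yields $|I_K| \geq \rho^{-K}$; but $|I_K|$ is bounded by $|\Match| \leq n$ since every item in $I_K$ is matched, forcing $K \leq \ln n / \ln(1/\rho) \leq \ln n / (1-\rho)$. The BFS therefore halts within $O(\ln n/(1-\rho))$ layers, producing a tail augmenting path of the claimed length.

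The main obstacle is justifying the restriction to $G^*$: one must check that (a) an item $j \in T_\rho$ is free in the restricted matching $\Match \cap (B_\rho \times T_\rho)$ iff it is free in $\Match$ itself, and (b) the market allocation confined to $G^*$ continues to support the money-accounting identities used above. Both reduce to the observation that $T_\rho$-items interact, under both $\Match$ and the market equilibrium, exclusively with $B_\rho$-buyers, so the restriction loses no structure relevant to the argument.
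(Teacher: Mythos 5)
Your proof is correct and follows essentially the same strategy as the paper's: grow level sets from $j^*$, use the money-accounting identity (every buyer spends its unit budget on lowest-price neighbors, every item collects at most $p_j$) to show that as long as no free item is found, each new layer must be larger by a factor of $1/p_{j^*}$, and conclude the path is short since the layers cannot exceed size $n$. The only cosmetic differences are bookkeeping: the paper defines sets $\Right_k, \Left_k$ inductively where $\Right_{k+1}$ contains only items with $y_{ij}>0$ for $i\in\Left_k$, while you take cumulative BFS layers and admit all $G^*$-neighbors (a superset); and you cast the expansion as $|B_{K+1}|\le\sum_{j\in I_{K+1}}p_j\le\rho|I_{K+1}|$, whereas the paper writes $|\Right_{k+1}|\ge\sum_{i\in\Left_k}1/q_i\ge|\Left_k|/p_{j^*}$ --- the same inequality read from the other side. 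Your obstacles (a) and (b) are indeed resolved by the observation you make (a buyer matched to or buying a $T_\rho$-item has $q_i\le\rho$), so the proof is complete.
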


\begin{proof}
  Consider market clearing prices $p$ and an allocation
  $y_{ij}$. Let $j^*\in T$ be an arbitrary item with price
  $p_{j^*}\in[0,1)$. We will denote $\Match(i)$ to be the item that buyer $i$ is matched to under $\Match$. If $p_{j^*}=0$ then there is no $i$ such that
  $j^*\in N(i)$, the item is therefore unmatched, and the claim holds.
  Otherwise, we define the following sets inductively,\vspace{-0.2cm}
  \begin{align*}
    \Right_1 & =\{j^*\}\\
    \Left_k & = \{i \mid \Match(i) \in \Right_{k}\} & k=1, 2, \ldots, \\
    \Right_{k+1} &= \{j \mid \ y_{ij}>0, i\in \Left_{k}\} & k=1, 2,  \ldots,
  \end{align*}
  That is, $\Left_k$ is the set of buyers that are matched in $\Match$ to items
  in $\Right_{k}$, and $\Right_{k+1}$ is the set of items that are bought by
  some buyer in $\Left_k$ in the market clearing allocation. We prove the
  following inductively: If $\Right_1, \ldots, \Right_{k}$ do not contain an
  unmatched item, then (a)~$p_{j} \leq p_{j^*}$ for all items $j\in
  \Right_{k+1}$, and (b)~the size $|\Right_{k+1}| \geq (\nicefrac{1}{p_{j^*}})^{k}$.

  The base case for $\Right_1$ trivially holds. We begin by proving property (a) of the induction. Consider an item $j \in \Right_{k+1}$. Since it is bought strictly positively by some item $i \in \Left_{k}$, its price $p_j$ must be the price paid by $i$, that is, $p_j = q_i$. Moreover, observe that $q_i \leq p_{\Match(i)}$, by definition. And since $i \in \Left_{k}$, we have $\Match(i) \in \Right_{k}$, so the induction hypothesis implies $p_{\Match(i)} \leq p_{j^*}$. Together, this shows that $p_j =q_i \leq p_{\Match(i)} \leq p_{j^*}$ as desired.
    
  To prove property (b) of the induction, suppose that $\Right_{k}$ has no unmatched items, then we have that $|\Left_{k}|= |\Right_{k}|$. Hence, we have,
  \begin{align}
    |\Right_{k+1}| & \textstyle \triangleq |\{j \mid \ y_{ij}>0, i\in \Left_{k}\}| = \sum_{j \in \Right_{k+1}}\sum_{i : j\in N(i)}y_{ij} \label{exp-ineq1}\\
            & \textstyle\geq \sum_{i\in \Left_{k}} \sum_{j\in \Right_{k+1}}y_{ij} \label{exp-ineq2}\\
            & \textstyle= \sum_{i\in \Left_{k}}\frac{1}{q_i} \geq \frac{|\Left_{k}|}{p_{j^*}} =
              \frac{|\Right_{k}|}{p_{j^*}} \geq \big(\frac{1}{p_{j^*}}\big)^{k}. \label{exp-ineq3}
  \end{align}
  Equality \eqref{exp-ineq1} holds since every item $j$ that is allocated is
  fully sold, and so $\sum_{i : j\in N(i)} y_{ij}=1$, and
  \eqref{exp-ineq2} holds since the RHS sums only on the subset of
  $y_{ij}>0$ from $\Left_{k}$ to $\Right_{k+1}$.  Finally, \eqref{exp-ineq3}
  holds by \Cref{thm:ME}  \ref{prop2},\ref{prop3} since each buyer
  $i\in \Left_{k}$ spends all its money on items of price $q_i$ that are
  in $\Right_{k+1}$ and so $q_i \cdot \sum_{j\in \Right_{k+1}} y_{ij} = 1$. The next
  inequality holds since $q_i\leq p_{j^*}$, and finally we use the
  induction hypothesis.

  By our construction, if $\Right_{k}$ contains an unmatched item $j$, then
  there is a tail augmenting path from $j^*$ to $j$ of length $k$. To conclude the proof, note
  that for any $\Right_k$ whose items are all matched, $|\Right_k|\leq n$ (the
  number of buyers), since otherwise there must be an unmatched item
  in $\Right_k$. Thus, $(\frac{1}{p_{j^*}})^{k-1} \leq |\Right_k| \leq n$.
  Simplifying we get that the length of such a tail augmenting path is
  at most $O\big(\max\big\{1,\frac{\ln(n)}{\ln (1/p_{j^*})}\big\}\big)= O\big(\frac{\ln n}{1-p_{j^*}}\big)$, where the final inequality follows
  from $1-x \leq \ln \nicefrac{1}{x}$ for $x\in(0,1]$. 
\end{proof}

\subsection{Bounding the Augmentations}

\begin{fact} 
\label{fact:boundedprices}
The price of all items \emph{before the arrival} of the $i^{th}$ buyer is either $1$ or at most $1-\frac{1}{i}$. There is no tail augmenting path from items with price $1$.
 \end{fact}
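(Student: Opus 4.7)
The plan is to prove the two assertions of \Cref{fact:boundedprices} separately, exploiting conservation of budget in a restricted sub-market of the equilibrium, together with the standing assumption that SAP maintains a maximum matching of size equal to the number of arrived buyers.

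For the price bound, I would focus on the largest equilibrium price strictly below 1. Let $\hat p := \max\{p_j : p_j < 1\}$ (if no such price exists, there is nothing to prove). Let $T_{\hat p} := \{j : p_j = \hat p\}$ and let $B_{\hat p}$ be the set of buyers whose minimum neighborhood price $q_b$ equals $\hat p$. By \Cref{thm:ME}\ref{prop2}, a buyer in $B_{\hat p}$ only buys from items in $T_{\hat p}$, and conversely an item in $T_{\hat p}$ is only bought by a buyer in $B_{\hat p}$; so trade involving $T_{\hat p}$ lives entirely in this sub-market. Equating total buyer spending to total item revenue (using \Cref{thm:ME}\ref{prop1} and \ref{prop3}) yields
\[
|B_{\hat p}| \;=\; \hat p \cdot |T_{\hat p}|.
\]
Since $\hat p < 1$ forces $|T_{\hat p}| \geq |B_{\hat p}|+1$, and since only $i-1$ buyers have arrived and so $|B_{\hat p}| \leq i-1$, we get $\hat p \leq |B_{\hat p}|/(|B_{\hat p}|+1) = 1 - 1/(|B_{\hat p}|+1) \leq 1 - 1/i$. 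Every other price strictly below 1 is at most $\hat p$, which is the desired bound.

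For the second part, I would analogously set $T_1 := \{j : p_j = 1\}$ and $B_1 := \{b : q_b = 1\}$. Since all prices lie in $[0,1]$ (the maximum equilibrium price is 1 because Hall's condition holds under the standing assumption that a perfect matching exists after each arrival), every neighbor of any $b \in B_1$ has price exactly 1, so $N(B_1) \subseteq T_1$; the same sub-market argument as above then gives $|T_1| = |B_1|$. Since SAP maintains a maximum matching, every buyer is matched, and each $b \in B_1$ is matched to some item in $N(b) \subseteq T_1$; by the cardinality equality, every item in $T_1$ is also matched. Finally, any tail augmenting path starting at $j^\star \in T_1$ is forced by alternation (matched edges from items lead to $B_1$, unmatched edges from buyers in $B_1$ lead back into $N(B_1) \subseteq T_1$) to stay inside $B_1 \cup T_1$, all of whose items are matched; hence the path cannot end at a free item, proving the second claim.

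The main conceptual step—and the only real obstacle—is identifying the relevant sub-market: realizing that the items at the highest sub-1 price, together with the buyers who consider them best-value, form an isolated sub-market whose price is pinned down by a budget-balance equation. Once this is in hand, the integer constraint $|B_{\hat p}|\leq i-1$ is exactly what converts the resulting rational price into the discrete gap between $1$ and $1-1/i$, and the closure property $N(B_1)\subseteq T_1$ automatically gives that the price-$1$ component is a dead end for tail augmenting paths.
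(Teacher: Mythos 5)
Your proof is correct and takes essentially the same approach as the paper: both rely on the closure of the price-$p$ sub-market and budget balance to obtain $p = |B_p|/|T_p|$, then use the integer constraint $|T_p| \geq |B_p|+1$ (when $p<1$) together with $|B_p|\leq i-1$ to get $p\leq 1-1/i$, and both use $|B_1|=|T_1|$ with $N(B_1)\subseteq T_1$ to conclude that all price-$1$ items are matched. The only cosmetic difference is that you single out $\hat p = \max\{p_j : p_j < 1\}$ rather than arguing for an arbitrary price level, and you spell out the alternation argument that a tail augmenting path from $T_1$ is trapped in $B_1 \cup T_1$, whereas the paper appeals directly to Hall's theorem.
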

 
\begin{proof}
  Recall that we are assuming that all buyers can be matched, hence all prices being at most $1$ follows from the KKT optimality conditions for \eqref{eq:EG1} and Hall's theorem. Let $T_p$ be a set of items with some
  price $p\leq1$, and let $B_p := \{i \mid y_{ij}>0, j\in T_p\}$. Then, since the buyers in $B_p$ buy only items in $T_p$, we have
  $|B_p|= \sum_{i\in B_p, j\in T_p} p \cdot y_{ij} = p \cdot |T_p|$, giving
  $p=\nicefrac{|B_p|}{|T_p|}$. This fraction is either $1$, or at most
  $\frac{i-1}{i}$ (because $|B_p|\leq i-1$ before the arrival of $i$). For the items of price $1$, the buyers in $B_1$ have edges only to items in $T_1$ (since otherwise, they would buy cheaper items) and $|B_1|=|T_1|$. Thus, by Hall's theorem, they are all matched and there can be no tail augmenting path from such items.
\end{proof}  

\BHRThm*

\begin{proof}

For each item $j$, let $p_{j}(i)$ be its price \emph{before} the arrival of the $i^{th}$ buyer. 
  Let $j_{\min}\in N(i)$ be a neighbor of $i$ of minimal price, and define
  $q_{\min}(i) := \min_{j\in N(i)}\{p_{j}(i)\}$. By \Cref{fact:boundedprices} and the assumption that each arriving buyer can be matched, $q_{\min}(i)\le 1-1/n$. 

  Let $\Delta p_{j}(i)$ be the change in the price of item $j$ due to the arrival of buyer $i$. By \Cref{lem:expansion} the length
  of the shortest augmenting path from $i$ is at most that of the tail
  augmenting path from $j_{\min}$ which can be bounded as follows,  
  \begin{align}
    \ell_i & = O\left(\frac{\ln n}{1-q_{\min}(i)}\right)
             =  O\left(\frac{\ln n}{1-q_{\min}(i)}\right) \sum_{j :
             p_{j}(i)\geq q_{\min}(i)}\Delta p_{j}(i) \leq O\left(\ln
             n\right)
             \cdot \sum_{j\in T}\frac{\Delta p_{j}(i)}{1-p_{j}(i)},  \label{main-ineq}
  \end{align}
  where the second equality uses \Cref{thm:ME} \ref{prop3} to infer
  that $\sum_{j=1}^{m}p_j$ equals the current number of matched
  buyers, and \Cref{lem:monotone} says that the price
  of items with $p_{j}<q_{\min}(i)$ do not change, which together
  imply that 

  $\sum_{j : p_{j}\geq q_{\min}(i)}\Delta p_{j}(i)=1$. 

Hence, the total number of augmentations is
    \[ \sum_{i=1}^{n}\ell_i \leq O(\ln n) \cdot
      \underbrace{\sum_i\sum_{j\in T}\frac{\Delta
          p_{j}(i)}{1-p_{j}(i)}}_{(\star)} \leq O(\ln n) \cdot n \cdot
      \left( 1+ \int_{p = 0}^{1 - \nicefrac{1}{n}} \frac{dp}{1-p} \right) = O(n
      \ln^2 n), \] where the second inequality uses two facts:
    firstly, because $\sum_{j\in T}\Delta p_j(i)=1$ and as
    $\nicefrac{1}{1-x}$ is monotonically increasing, the sum $(\star)$
    is maximized when $n$ items have a final value of $1$. Secondly,
    we bound the last term of any one sum
    $\sum_i\frac{\Delta p_{j}(i)}{1-p_{j}(i)}$ by $1$, and the 
    previous terms---for which $p_{j}(i)\leq 1 - \nicefrac{1}{n}$---by the
    integral (note that $\Delta p_{j}(i) \geq 0$ for all $i, j$ by \Cref{lem:monotone}). This completes the proof. 
\end{proof}

\section{Extending to a General Matroid}
\label{sec:general-matroids}

We now prove our main result, for the intersection of a partition
matroid $\cP$ with a general matroid $\cM$ over the same ground set
$E$. The matroid $\cP$ has $n$ parts, each with rank $1$.
The elements of the $i^{th}$ part are
revealed at time $i$ (the order of the parts is unknown). We assume that there is an
independent set of size $n$ after the final arrival\footnote{This is w.l.o.g.: if an arriving part does not cause an augmentation, then no future augmenting path will pass through that part. And thus, the part can be ignored henceforth. See also \cite[Obs. 9]{BHR18} for a description in the matching setting.}, and hence the
maximum independent set after the arrival of $i$ parts has size
$i$. 

The \emph{shortest augmenting path} also works in this setting.
Let $I$ be the chosen independent set
before the arrival of $i$, and $\mc{P}|_i$ and $\mc{M}|_i$ be the
matroids restricted to elements $E|_i$, the elements which have been revealed thus
far. Define the
\emph{exchange graph} $D_{\mc{P}|_i, \mc{M}|_i}(I)$ as the bipartite
graph on nodes $(I, E|_i \setminus I)$ with (directed) arcs:
\begin{enumerate}[nolistsep]
    \item $y \to x$ is an arc of $D_{\mc{P}|_i, \mc{M}|_i}(I)$ if $I - y + x$ is independent in $\mc{P}$.

    \item $y \gets x$ is an arc of $D_{\mc{P}|_i, \mc{M}|_i}(I)$ if $I - y + x$ is independent in $\mc{M}$. 
\end{enumerate}
The algorithm finds a shortest path in $D_{\mc{P}|_i, \mc{M}|_i}(I)$
from some element in $P_i$ to a \emph{free element} in $\mc{M}$
(that is, some $e \not \in I$ for which $I + e$ is independent in
$\mc{M}$). This is an \emph{augmenting path}: it defines a valid sequence of exchanges to form a
new independent set, and the resulting independent set will have size
$\abs{I} + 1$. The correctness of this algorithm and its analysis
in the offline setting are due to Aigner and Dowling, and also Lawler~\cite{AignerDowling71,Lawler75} (see also \cite[\S41.2]{schrijver2003volB}).

Again, let $\ell_i$ be the length of the shortest augmenting path upon
the arrival of the $i$th part, and we want to bound the worst-case
value of $\sum_{i=1}^n \ell_i$. We restate our main Theorem:

\MainThm*

The remainder of the paper is dedicated to the proof of \Cref{thm:main-result}. It proceeds
analogously to our proof for bipartite matchings in \S\ref{sec:matchings}: we begin by defining a corresponding market we call the {\em matroid intersection market} which, in conjunction with the matroid intersection skeleton, yields ``prices'' for the elements of $E$. We then use properties of these prices to prove an ``Expansion Lemma'' generalizing \Cref{lem:expansion}. This lemma bounds the length of augmenting paths in terms of prices, and and hence gives our main result. However, as
mentioned in \S\ref{sec:introduction}, each of these steps requires
us to build on the ideas we used for the matchings case. 

\subsection{The matroid intersection (MI) market}
\label{sec:MI-market}

The {\em matroid intersection (MI) market} is defined for an arbitrary
matroid $\mc{M} = (E, \cI)$ and a partition matroid $\mc{P}$ on the
elements in $E$. There is a set $B$ of \emph{buyers}, where each part
$P_i$, $i=1, \ldots, |B|$ is associated with a buyer $i$. In this
market, $E$ is the set of \emph{items}, the items in $P_i$ are precisely
those that are of interest to buyer $i$, and each buyer arrives at the
market with a budget of $m_i$ dollars. The utility $u_i$ to agent $i$
for allocation $\{y_{e}\}_{e \in P_i}$ is
$\sum_{e \in P_i} y_e$, and the allocation constraints are
$\sum_{e \in S} y_e \le \rank_{\mc{M}}(S)$ for each $S\subseteq E$.
It is immediate that the Fisher market with $u_{ij} \in \{0,1\}$ is
the special case in which $\mc{M}$ is a partition matroid. A market
equilibrium for an MI market can be computed using the following
convex program that optimizes over the matroid polytope of $\mc{M}$:

\begin{equation}\tag{EG2}\label{eq:EG2}
\begin{aligned}
    \max\quad &\textstyle \sum_{i \in B} m_i \cdot \log \big( \sum_{e \in P_i} y_{e} \big) \\
     \quad & \textstyle \sum_{e \in S} y_{e} \leq \rank_{\mc{M}}(S) \quad \forall S \subseteq E\quad \quad &&(\alpha_S) \\
    &y_{e} \geq 0. 
\end{aligned}
\end{equation}
Here $y_e$ is the amount of item $e$ allocated (to the buyer whose
part contains $e$), and $\alpha_S$ for $S \subseteq E$ are a set of
dual variables. The optimal dual is not necessarily unique.  For any
optimal dual solution $\alpha$, we define \emph{prices} as
\begin{gather}
  p_e := \sum_{S \ni e} \alpha_S \quad \text{for each element }e \in
  E. \label{eq:prices-defn}
\end{gather}
The KKT conditions for \eqref{eq:EG2} are the following:
\begin{enumerate}[label=(\Alph*),nosep]
    \item Stationarity and complementary slackness:
\begin{align}
p_e \ge \frac{1}{\sum_{e' \in P_i} y_{e'}} \quad\text{ and }\quad  y_e \cdot \left(p_e  -  \frac{1}{\sum_{e' \in P_i} y_{e'}}\right)&= 0 \quad\quad \forall e\in E. \label{eq:KKT1} \\
\alpha_S \cdot\left( \sum_{e' \in S} y_{e'} - \text{rank}_{\mc{M}}(S)\right) &= 0 \quad\quad \forall S \subseteq E.\quad\quad \label{eq:KKT2}&
\end{align}
\item Primal feasibility and $\alpha_S \geq 0$ for all $S \subseteq E$ (dual feasibility).
\end{enumerate}

\begin{lem}\label{thm:MI-market-conditions}
Let $\alpha_{S}$ be any optimal dual values for the convex program \eqref{eq:EG2}, and $p_e$ the corresponding prices. The following hold:
\begin{enumerate}[label=(\alph*),nosep]

    \item All goods are maximally allocated: For every $e$ in a part $P_i$ with $m_i \neq 0$, we have $p_e > 0$, so there is some $S \ni e$ such that $\alpha_S > 0$, and the corresponding primal constraint  is tight (i.e., $\sum_{e \in S} y_e = \rank(S)$).
    \item Each buyer $i$ is only buying elements (i.e. $y_e > 0$) of minimum price $q_i := \min_{e \in P_i} p_e$ (or equivalently, highest bang-per-buck). \label{prop:MI-2}
    \item Each buyer spends all of their money: $q_i \cdot \sum_{e \in P_i} y_{e} = \sum_{e \in P_i} p_e y_{e} = m_i$. \label{prop:MI-3}
    \item $\sum_{S \subseteq E} \alpha_{S} \cdot \rank(S) = \sum_{i\in B} m_i $. \label{prop:MI-4}
\end{enumerate}
\end{lem}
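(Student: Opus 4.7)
The plan is to derive all four claims directly from the stated KKT conditions \eqref{eq:KKT1}--\eqref{eq:KKT2} together with dual feasibility $\alpha_S \ge 0$, with no additional machinery. Strong duality is available because \eqref{eq:EG2} is a concave maximization satisfying Slater's condition (any strictly positive scaling of a feasible base gives a strict interior point of the matroid polytope on the relevant supports). Throughout I would keep track of the budget factor $m_i$: since $\partial_{y_e} \sum_{j} m_j \log(\sum_{e' \in P_j} y_{e'}) = m_i / \sum_{e' \in P_i} y_{e'}$ for $e \in P_i$, the stationarity condition \eqref{eq:KKT1} is really $p_e \geq m_i / \sum_{e' \in P_i} y_{e'}$ with complementary slackness when $y_e > 0$; I would make this explicit at the outset and then read off the four claims.

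For part (a), I would first observe that $m_i \neq 0$ forces $\sum_{e' \in P_i} y_{e'} > 0$ at any optimal primal solution, for otherwise the objective is $-\infty$. Applying stationarity \eqref{eq:KKT1} at any $e \in P_i$ then gives $p_e \geq m_i / \sum_{e' \in P_i} y_{e'} > 0$. Since $p_e = \sum_{S \ni e} \alpha_S$ is a sum of nonnegative $\alpha_S$'s, at least one $S \ni e$ must have $\alpha_S > 0$, and \eqref{eq:KKT2} then forces the associated rank constraint to be tight.

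For parts (b) and (c), I would use the element-level complementary slackness. If $y_e > 0$ for $e \in P_i$, then \eqref{eq:KKT1} gives $p_e = m_i / \sum_{e' \in P_i} y_{e'}$; since the same stationarity inequality forces $p_{e'} \geq m_i / \sum_{e' \in P_i} y_{e'}$ for every $e' \in P_i$, this $p_e$ is exactly $q_i = \min_{e' \in P_i} p_{e'}$, proving (b). Multiplying $p_e = q_i$ through by $y_e$ and summing over $e \in P_i$ yields $\sum_{e \in P_i} p_e y_e = q_i \sum_{e \in P_i} y_e = m_i$, which is (c).

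For part (d), I would swap the order of summation after invoking \eqref{eq:KKT2} to replace each term $\alpha_S \rank(S)$ by $\alpha_S \sum_{e \in S} y_e$ (valid because either $\alpha_S = 0$ or the constraint is tight):
\[ \sum_{S \subseteq E} \alpha_S \cdot \rank(S) \;=\; \sum_{S} \alpha_S \sum_{e \in S} y_e \;=\; \sum_{e \in E} y_e \sum_{S \ni e} \alpha_S \;=\; \sum_{e \in E} p_e \, y_e \;=\; \sum_{i \in B} \sum_{e \in P_i} p_e\, y_e \;=\; \sum_{i \in B} m_i, \]
where the last equality applies part (c) to each buyer. Since the lemma is essentially a routine unpacking of KKT, there is no real obstacle: the only detail worth double-checking is keeping the two layers of complementary slackness (element-level \eqref{eq:KKT1} and set-level \eqref{eq:KKT2}) separate, and tracking the $m_i$ factor through the stationarity equation so that the constants in (c) and the final step of (d) come out correctly.
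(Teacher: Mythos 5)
Your proof is correct and takes the same approach as the paper, which states only that all four parts follow directly from the KKT conditions; you have correctly unpacked those conditions. In particular, you are right that the stationarity condition as displayed in \eqref{eq:KKT1} is missing the budget factor $m_i$ in the numerator (it should read $p_e \ge m_i / \sum_{e' \in P_i} y_{e'}$), and your careful tracking of that factor is precisely what makes the constants in parts (c) and (d) come out correctly for general budgets.
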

\begin{proof}
All parts follow directly from the above KKT optimality conditions.
\end{proof}

Jain and Vazirani~\cite{JV10} proposed a generalization of the Fisher
linear market called {\em Eisenberg-Gale markets}, which capture many
interesting markets, such as the resource allocation framework of
Kelly~\cite{Kelly97}. By definition, equilibria in these markets can
be computed using an Eisenberg-Gale type convex program. One specific
class of these markets are the so-called {\em submodular utility
  allocation (SUA)} markets, in which there are $n$ buyers (where
buyer $i$ has budget $m_i$) and each buyer has an associated utility
$u_i$. There are packing constraints on the utilities, which are
encoded via a {\em polymatroid} function
$\nu: 2^{[n]} \rightarrow \mathbb{R}_+$ (i.e., the function $\nu$ is
submodular, monotone, and $\nu (\emptyset) = 0$). The corresponding
SUA convex program is
\begin{gather}
  \max \big\{ \sum_i m_i \log u_i \mid \sum_{i \in S} u_i \leq \nu(S)
  ~~\forall S \subseteq [n], u \geq 0 \big\}.
\end{gather}
Note that in the MI market, we have an allocation constraint for each
subset of $E$, whereas in an SUA market there is an allocation
constraint only for each subset $\cup_{i \in A} P_i$, where
$A \subseteq [n]$. Nonetheless, an application of a continuous version
of Rado's theorem~\cite{McDiarmid75} can be used to prove the
following. (The proof is deferred to Appendix~\ref{sec:appendix}.)

\begin{restatable}{lem}{MIisSUA}
\label{lem:MIisSUA}
The MI market is a submodular utility allocation market.
\end{restatable}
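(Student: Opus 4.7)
The plan is to exhibit an explicit polymatroid function $\nu$ on $2^{[n]}$ so that the feasible utility region of the MI market coincides with that of the SUA market driven by $\nu$. Set
\[
\nu(A) := \rank_{\mc{M}}\bigg(\bigcup_{i \in A} P_i\bigg) \quad \text{for } A \subseteq [n].
\]
First I would verify that $\nu$ is a polymatroid function: $\nu(\emptyset) = \rank_{\mc{M}}(\emptyset) = 0$; monotonicity is inherited from $\rank_{\mc{M}}$ and monotonicity of unions; submodularity follows from submodularity of $\rank_{\mc{M}}$ combined with the identities $\bigcup_{i \in A \cup B} P_i = (\bigcup_{i \in A} P_i) \cup (\bigcup_{i \in B} P_i)$ and $\bigcup_{i \in A \cap B} P_i = (\bigcup_{i \in A} P_i) \cap (\bigcup_{i \in B} P_i)$, both of which hold because the $P_i$'s are disjoint.

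Next I would show that the achievable utility vectors $u_i := \sum_{e \in P_i} y_e$ for $y$ in the matroid polytope of $\mc{M}$ coincide with $\{u \geq 0 : \sum_{i \in A} u_i \leq \nu(A)\ \forall A \subseteq [n]\}$. The easy direction: for $y$ feasible in \eqref{eq:EG2} and any $A$, disjointness of the $P_i$'s gives
\[
\sum_{i \in A} u_i \;=\; \sum_{e \in \bigcup_{i \in A} P_i} y_e \;\leq\; \rank_{\mc{M}}\bigg(\bigcup_{i \in A} P_i\bigg) \;=\; \nu(A).
\]
The reverse direction is the core content of the lemma: given $u \geq 0$ satisfying all the $\nu$-inequalities, produce $y \geq 0$ in the matroid polytope of $\mc{M}$ with $\sum_{e \in P_i} y_e = u_i$. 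This is exactly McDiarmid's continuous extension of Rado's theorem~\cite{McDiarmid75}, which asserts that the projection of the matroid polytope of $\mc{M}$ onto the part-sums $(u_i)_{i \in [n]}$ is precisely the polymatroid defined by $\nu$.

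With the two feasible regions matching, the MI program \eqref{eq:EG2} in the variables $y$ is the lift of the SUA program $\max\{\sum_i m_i \log u_i : u \in P(\nu),\, u \geq 0\}$ via the mapping $y \mapsto (\sum_{e \in P_i} y_e)_i$, so the $u$-part of any equilibrium of \eqref{eq:EG2} is an equilibrium of the SUA market and vice versa. The only step that requires real work is invoking the polymatroidal Rado theorem correctly; everything else is a direct check of the polymatroid axioms and a rewriting of the constraints. I expect the main obstacle to be stating McDiarmid's theorem in the exact form needed (a fractional exchange / projection statement for matroid polytopes partitioned by $\mc{P}$), but once that is in hand the equivalence of markets is immediate.
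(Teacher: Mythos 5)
Your proof is correct and takes essentially the same route as the paper: you define the same polymatroid function $\nu(A) = \rank_{\mc{M}}(\bigcup_{i\in A} P_i)$, establish the easy containment by disjointness of the parts, and invoke McDiarmid's polymatroidal extension of Rado's theorem for the reverse direction. The only addition is your explicit check of the polymatroid axioms for $\nu$, which the paper asserts without spelling out.
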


\subsection{The Matroid Intersection Skeleton}
\label{sec:comb-decomp}

In this section, we give a combinatorial description of a specific set
of prices of elements with respect to the buyers' budgets. We refer to this decomposition as the \emph{matroid intersection (MI) skeleton} of the matroid intersection market. In
\Cref{lem:kkt-comb-decomp} we show these prices correspond to a very
specific optimal dual solution to the convex program \eqref{eq:EG2}.
Hence, although optimal dual solutions are not unique in general, this
allows us to focus on a unique set of duals, which we subsequently
show can be related to a notion of \emph{expansion}.

\begin{defn}
  Denote the budget for a subset of buyers $B' \subseteq B$ to be
  $m(B') := \sum_{i \in B'} m_i$. Also define the \emph{neighborhood}
  of these buyers as all elements in all of their parts:
  $N(B') := \bigcup_{i \in B'} P_i$. With this, define
  \emph{inverse expansion} of a set of buyers as
  \[ 
    \text{InvExp}_{\mathcal{M}}(B') := \frac{m(B')}{\rank_{\mathcal{M}}(N(B'))}
  \]
\end{defn}

We now give an algorithm that outputs a nested family of elements
$\varnothing = E_0 \subseteq E_1 \subseteq \ldots \subseteq E_L = E$,
which we call the matroid intersection skeleton, as well as prices for each of the items in
$E$. The algorithm, which is given as \Cref{alg:cap}, does the
following: in each step $\ell$, the algorithm finds the
(inclusion-wise maximal) set $B_{\ell+1}$ of buyers having maximum
inverse expansion $\rho$. The elements spanned by the neighborhood of
those buyers are assigned a price of $\rho$. These elements are then
contracted in $\mc{M}$, and the buyers $B_{\ell+1}$ are ``peeled
off'', with the process then repeating on the contracted matroid with
the remaining buyers. Defining $E_\ell$ to be the set of elements
contracted until the $\ell^{th}$ step gives us the desired nested
family
$\varnothing = E_0 \subseteq E_1 \subseteq \ldots \subseteq E_L =
E$. Moreover, the dual variables $\alpha_S$ can be chosen to be
positive only on the sets in this nested family. 

A similar nested family (of buyers, instead of elements) is introduced in an algorithm for finding equilibria in SUA markets \cite{JV07}; however, such a decomposition is not uniquely extendable to all items in an MI market. Our MI skeleton is
one such extension, which in particular, connects the market to a notion of expansion on the matroid intersection. 

Henceforth, we will use $\mc{M}_\ell$ to denote the matroid
contraction $\mc{M} / E_\ell$, and $\rank^{(\ell)}$ and
$\InvExp^{(\ell)}$ to denote the rank and the inverse expansion with
respect to this matroid $\mc{M}_\ell$. 
\begin{algorithm}
  \caption{The Matroid Intersection Skeleton}\label{alg:cap}
  Initialize $E_0 \gets \varnothing$, the contracted elements.\;
  Initialize $B^{\text{rem}}_0 \gets B$ the remaining buyers. \;
  \For{$\ell = 0, 1, \ldots$ until $B^{\text{rem}}_\ell$ is empty}{
    $\rho \gets \max_{B' \subseteq B^{\text{rem}}_\ell} \text{InvExp}^{(\ell)}(B')$.\;
    Find the (unique) largest set of buyers $B_{\ell + 1} \subseteq B^{\text{rem}}_\ell$ with inverse expansion $\text{InvExp}^{(\ell)}(B_{\ell + 1}) = \rho$.\;
    Consider the elements $S' := \Span_{\mc{M}_\ell}(N(B_{\ell+1}))$. \;
    Set the prices of each $e \in S'$ to be $\alg{p}_{\ell + 1} := \rho$. \;
    $E_{\ell + 1} \gets E_\ell \cup S'$.\;
    $B^{\text{rem}}_{\ell+1} \gets B^{\text{rem}}_\ell \setminus B_{\ell+1}$.
}
\end{algorithm}

\begin{restatable}{lem}{algWellDefined}
  \label{lem:algWellDefined}
  The MI skeleton is well-defined, and
  $\alg{p}_\ell > \alg{p}_{\ell + 1}$ for every $\ell$.
\end{restatable}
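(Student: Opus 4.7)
The plan is to prove both claims using the submodularity of the matroid rank function together with the fact that the neighborhood operator $N$ commutes with union and intersection (since the parts of $\mc{P}$ are disjoint, one easily checks $N(B_1)\cup N(B_2)=N(B_1\cup B_2)$ and $N(B_1)\cap N(B_2)=N(B_1\cap B_2)$). Throughout, the budget function $m$ is modular and the buyers have positive budgets, so inverse expansions are positive whenever neighborhoods have positive rank.

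For the ``well-defined'' part, the only substantive check is that a \emph{unique} inclusion-wise largest maximizer of $\InvExp^{(\ell)}$ exists in each iteration. I would establish this by showing that the collection of subsets of $B^{\text{rem}}_\ell$ attaining the maximum value $\rho$ is closed under union. Given $B_1, B_2$ both achieving $\rho$, I combine (i) modularity of $m$, (ii) the commuting of $N$ with $\cup, \cap$, (iii) submodularity of $\rank^{(\ell)}$, and (iv) the maximality upper bounds $m(B_1\cup B_2) \le \rho\cdot \rank^{(\ell)}(N(B_1\cup B_2))$ and $m(B_1\cap B_2)\le \rho\cdot \rank^{(\ell)}(N(B_1\cap B_2))$. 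A short chain of inequalities forces equality throughout, in particular giving $\InvExp^{(\ell)}(B_1\cup B_2)=\rho$. Closure under union immediately yields a unique largest maximizer.

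For the strict inequality $\alg{p}_\ell > \alg{p}_{\ell+1}$, I would argue by contradiction: assume $\alg{p}_{\ell+1} \ge \alg{p}_\ell$ and exhibit a strict superset of $B_\ell$ in $B^{\text{rem}}_{\ell-1}$ that also attains the maximum of $\InvExp^{(\ell-1)}$, contradicting the uniqueness just established. The natural candidate is $B_\ell \cup B_{\ell+1}$, which is a strict superset of $B_\ell$ because $B_{\ell+1}\ne\varnothing$ and $B_\ell\cap B_{\ell+1}=\varnothing$ by construction. The crucial technical ingredient is the chain-rule identity
\[
  \rank^{(\ell-1)}\bigl(N(B_\ell)\cup N(B_{\ell+1})\bigr)=\rank^{(\ell-1)}\bigl(N(B_\ell)\bigr)+\rank^{(\ell)}\bigl(N(B_{\ell+1})\bigr),
\]
which I would derive from $\rank_{\mc{M}/A}(X)=\rank_\mc{M}(X\cup A)-\rank_\mc{M}(A)$ applied twice, combined with the observation that $E_\ell = \Span_\mc{M}(N(B_\ell)\cup E_{\ell-1})$, so contracting $E_\ell$ induces the same rank function on the uncontracted part as contracting $N(B_\ell)\cup E_{\ell-1}$ does.

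Given the chain-rule identity, $\InvExp^{(\ell-1)}(B_\ell\cup B_{\ell+1})$ is exactly the weighted average of $\alg{p}_\ell$ and $\alg{p}_{\ell+1}$ with positive weights $\rank^{(\ell-1)}(N(B_\ell))$ and $\rank^{(\ell)}(N(B_{\ell+1}))$. Under the assumption $\alg{p}_{\ell+1}\ge \alg{p}_\ell$, this average is at least $\alg{p}_\ell$, so $B_\ell\cup B_{\ell+1}$ is a strictly larger set in $B^{\text{rem}}_{\ell-1}$ achieving inverse expansion $\ge \alg{p}_\ell$, contradicting either the maximality of $\alg{p}_\ell$ (if strict) or the uniqueness of the largest maximizer $B_\ell$ (if equal). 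The main obstacle I anticipate is verifying the contraction identity cleanly---the remainder of the argument is a direct consequence of submodularity together with the union-closure already proven.
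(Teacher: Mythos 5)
Your argument for uniqueness (union-closure of maximizers via submodularity, modularity of $m$, and $N$ commuting with $\cup, \cap$) and for strict monotonicity (the contraction chain-rule identity plus the mediant inequality) matches the paper's proof essentially step for step, and both are sound. In particular your derivation of $\rank^{(\ell-1)}(N(B_\ell)\cup N(B_{\ell+1}))=\rank^{(\ell-1)}(N(B_\ell))+\rank^{(\ell)}(N(B_{\ell+1}))$ from $\rank_{\mc{M}/A}(X)=\rank_{\mc{M}}(X\cup A)-\rank_{\mc{M}}(A)$ and $E_\ell=\Span_{\mc{M}}(N(B_\ell)\cup E_{\ell-1})$ is exactly the calculation the paper performs.

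However, there is a gap in the ``well-defined'' part. You assert that the \emph{only} substantive check is uniqueness of the largest maximizer, but the paper proves a second thing that your proposal silently assumes: at every iteration $\ell$, each remaining buyer $i\in B^{\text{rem}}_\ell$ still has $P_i\setminus E_\ell\neq\varnothing$ (equivalently, $\rank^{(\ell)}(N(\{i\}))>0$). Without this, some $\InvExp^{(\ell)}(B')$ could have denominator zero and $\rho$ would be $\infty$, at which point the finite price $\alg{p}_{\ell+1}$ is not well-defined and your mediant argument for $\alg{p}_\ell > \alg{p}_{\ell+1}$ breaks, since it requires both denominators $\rank^{(\ell-1)}(N(B_\ell))$ and $\rank^{(\ell)}(N(B_{\ell+1}))$ to be positive. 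The needed statement is easy to supply in your framework by the same maximality device you already invoke: if some remaining $i$ had $P_i\setminus E_\ell\subseteq\Span_{\mc{M}_\ell}(N(B_{\ell+1}))$, then adding $i$ to $B_{\ell+1}$ would not increase $\rank^{(\ell)}(N(\cdot))$ while increasing $m(\cdot)$, so $\InvExp^{(\ell)}(B_{\ell+1}\cup\{i\})\geq\rho$, contradicting that $B_{\ell+1}$ is the largest maximizer. You should state and prove this (including the base case, which uses the problem's standing assumption that each arriving buyer can be matched) before asserting the price sequence is a sequence of finite reals.
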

The proof of \Cref{lem:algWellDefined} uses the submodularity of the
matroid rank function, and uncrossing arguments to show the uniqueness
of the sets $B_\ell$, and the fact that the densities are strictly
increasing; we defer the formal argument to
\Cref{sec:appendix}. Instead we focus on showing that the prices
defined by \Cref{alg:cap} indeed give us optimal duals. The proof proceeds by
induction on the total number of rounds $L$. We begin with the case
$L = 1$ (where the ``densest'' set of buyers form the entire matroid).

\begin{claim}[Single Round]
  \label{claim:single-round-KKT}
  Consider an instance in which the inverse expansion of the entire
  set of buyers, $\rho := \InvExp_{\mc{M}}(B)$, is the maximum inverse
  expansion. Then any optimal primal-dual pair of solutions
  $\opt{y}, \opt{\alpha}$ for the convex program
  \eqref{eq:EG2} satisfies
  $\opt{p}_e := \sum_{S \ni e} \opt{\alpha}_S = \rho$. That
  is, all prices \emph{with respect to the optimal duals}
  $\opt{\alpha}$ are equal to $\rho$. One such optimal dual solution sets
  $\alg{\alpha}_E = \rho$ and $\alg{\alpha}_S = 0$ for all other sets.
 
\end{claim}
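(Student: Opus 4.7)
The plan is to verify the KKT conditions for a specific candidate primal--dual pair, and then bootstrap uniqueness of prices via an LP--duality sandwich. First, I would exhibit an optimal primal $y$ whose buyer utilities are $u_i = m_i/\rho$. By \Cref{lem:MIisSUA}, the MI market is SUA with polymatroid $\nu(B') := \rank_{\mc{M}}(N(B'))$, and (via the continuous Rado theorem that underlies its proof) an allocation $y$ in the matroid polytope of $\mc{M}$ attaining these utilities exists iff the vector $(m_i/\rho)_{i \in B}$ lies in the base polytope of $\nu$. That is precisely the condition $m(B')/\rho \le \rank_{\mc{M}}(N(B'))$ for every $B' \subseteq B$, i.e.\ $\InvExp_{\mc{M}}(B') \le \rho$, which holds by the assumption that $\rho = \InvExp_{\mc{M}}(B)$ is the maximum inverse expansion.

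Next, I would verify that $(y, \alg{\alpha})$ with $\alg{\alpha}_E = \rho$ and $\alg{\alpha}_S = 0$ otherwise is optimal. Dual feasibility is immediate, and the induced prices are $\alg{p}_e = \rho$ for every $e$. The stationarity condition \eqref{eq:KKT1} reads $\rho = m_i/u_i$, which is an exact equality by construction; and the only nontrivial complementary slackness in \eqref{eq:KKT2} is $\sum_{e \in E} y_e = \rank_{\mc{M}}(E)$. Without loss of generality we may take $E = \Span_{\mc{M}}(N(B))$ (elements outside this span contribute to no buyer's utility and can be dropped), so $\rank_{\mc{M}}(E) = \rank_{\mc{M}}(N(B)) = m(B)/\rho = \sum_i u_i = \sum_e y_e$, as needed. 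Thus $(y, \alg{\alpha})$ is a primal--dual optimal pair realizing the claimed dual.

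For uniqueness, let $(\opt{y}, \opt{\alpha})$ be any optimal pair. Strict concavity of the log-objective forces the utility vector $u^*_i = m_i/\rho$ to be unique, and stationarity then gives $\opt{p}_e \ge m_i/u^*_i = \rho$ for every $e \in P_i$, with equality whenever $\opt{y}_e > 0$. To upgrade this to equality on all of $N(B)$, I combine \Cref{thm:MI-market-conditions}\ref{prop:MI-4} with any test vector $z$ feasible for $\mc{M}$ restricted to $N(B)$:
\[
  \sum_e \opt{p}_e \cdot z_e \;=\; \sum_S \opt{\alpha}_S \Big( \sum_{e \in S} z_e \Big) \;\le\; \sum_S \opt{\alpha}_S \cdot \rank_{\mc{M}}(S) \;=\; m(B) \;=\; \rho \cdot \rank_{\mc{M}}(N(B)).
\]
Taking $z$ to be the indicator of any basis of $\mc{M}|_{N(B)}$, the right-hand side equals $\rho \sum_e z_e$, while the pointwise lower bound $\opt{p}_e \ge \rho$ on the basis supplies the matching reverse inequality. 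Equality throughout forces $\opt{p}_e = \rho$ on that basis, and varying over bases covers every non-loop of $\mc{M}|_{N(B)}$ (and loops of $\mc{M}$ can be discarded without loss of generality).

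The main obstacle is this final step. The MI dual is highly non-unique, so KKT alone determines only the prices of bought elements; the matroid-polytope sandwich above, which couples the budget identity from \Cref{thm:MI-market-conditions}\ref{prop:MI-4} with the existence of many bases of $\mc{M}|_{N(B)}$, is what forces the remaining prices to collapse to $\rho$.
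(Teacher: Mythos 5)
Your proposal is correct in essence, but it takes a genuinely different, constructive route from the paper's proof.

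The paper's argument is ``analytic'': it starts from an \emph{arbitrary} optimal primal--dual pair, defines $S^{\max}$ to be the set of maximum-price elements, shows $S^{\max}$ is tight via an uncrossing argument (intersecting the minimal tight sets $S_e$), shows $N(B^{\max})$ is tight for the set $B^{\max}$ of buyers purchasing from $S^{\max}$, and finally uses the inequality chain $m(K) \leq \opt{p}_{\max}\cdot \rank(N(K))$ with $K = B^{\max}$ to deduce $\opt{p}_{\max} = \rho$, and with $K = B$ to force $B^{\max} = B$ and $S^{\max} = E$. Your argument is ``constructive'': you first build an explicit optimal primal $y$ realizing $u_i = m_i/\rho$ (the Rado/McDiarmid linkage condition coincides exactly with the max-inverse-expansion hypothesis), verify the claimed dual $\alg{\alpha}$ against it via KKT, invoke strict concavity of the log-objective to pin down the optimal utilities uniquely, and then replace the paper's uncrossing argument with a basis test-vector argument: the budget identity $\sum_S \opt{\alpha}_S\, \rank(S) = m(B) = \rho \cdot \rank(N(B))$ paired with the indicator of any basis gives $\sum_{e \in F}\opt{p}_e \le \rho\,|F|$, which together with the pointwise lower bound $\opt{p}_e \ge \rho$ collapses the prices on $F$, and varying $F$ covers every non-loop. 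This is a clean trade: your route uses the Rado machinery (which the paper already develops for \Cref{lem:MIisSUA}) and strong duality more heavily, but entirely avoids the uncrossing step.

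Two small remarks. First, in your Rado invocation the phrase ``iff the vector lies in the base polytope of $\nu$'' should read ``polymatroid of $\nu$''---the McDiarmid statement characterizes linkage by the packing inequalities alone, not by the additional equality $u(B) = \nu(B)$; it happens that $(m_i/\rho)_i$ does lie in the base polytope here, so no harm done. Second, both your proof and the paper's have the same implicit gap when some buyer has $m_i = 0$ (which arises downstream, since unarrived buyers carry zero budget): stationarity then gives $\opt{p}_e \geq 0$ rather than $\opt{p}_e \geq \rho$ for $e$ in that part, so your basis argument only covers bases inside $N(\{i : m_i > 0\})$, and the paper's conclusion ``$\opt{q}_i = \opt{p}_{\max}$ for all $i$'' similarly fails for buyers with $u_i^* = 0$. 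Indeed, one can construct a two-element parallel-pair example with $m_1 = 1, m_2 = 0$ where the first sentence of the claim is literally false for the dual $\opt{\alpha}_{\{e_1\}} = 1$. Since only the second sentence of the claim (the existence of the specific optimal dual $\alg{\alpha}$, which both arguments establish correctly) is used in \Cref{lem:kkt-comb-decomp}, this does not propagate, but it is worth flagging if you want your version to be airtight.
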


\begin{proof}
  Let $S^{\max} \sse E$ be the set of elements with price
  $\opt{p}_{\max} := \max_{e \in E} \opt{p}_e$. We claim that
  $S^{\max} = E$.

  As a first step, we show that constraint corresponding to $S^{\max}$
  in~\eqref{eq:EG2} is tight. This uses the observation that if the
  constraints for two sets are tight, then the constraints for their
  union and intersection are also tight due to submodularity of the
  matroid rank function. Now define:
  \[
    S_e := \bigcap_{\substack{S \ni e \\ \opt{\alpha}_S > 0}} S.
  \]
  This set $S_e$ is the smallest tight set containing $e \in S^{\max}$.
  Note that any $e' \in S_e$ has price at least that of $e$, and hence
  belongs to $S^{\max}$ by the maximality of $e$'s price. In turn this
  implies that $S^{\max} = \cup_{e \in S^{\max}} S_e$; writing $S^{\max}$
  as a union of tight sets shows its tightness.

  Next, define $B^{\max}$ as the set of buyers buying from $S^{\max}$;
  we claim that $N(B^{\max})$ is also tight. Indeed,
  $N(B^{\max}) \sse S^{\max}$, since the buyers in $B^{\max}$ must buy
  at the least price in their neighborhood. All other items in
  $S^{\max} \setminus N(B^{\max})$ must have $y_e = 0$ since they are not
  sold.  This means
  $y(N(B^{\max})) = y(S^{\max}) = r(S^{\max}) \geq r(N(B^{\max}))$; combining
  with the constraint in \eqref{eq:EG2} means the last inequality is
  an equality.

  Now, observe that for any subset $K$ of buyers, we have
  \begin{equation}\label{eq:inv-exp<max-price}
    m(K) = \sum_{i \in K} \opt{q}_i \sum_{e \in P_i} y_e \leq
    \opt{p}_{\max} \sum_{i \in K} \sum_{e \in P_i} y_e \leq
    \opt{p}_{\max} \rank(N(K)), 
  \end{equation}
  where the equality uses
  \Cref{thm:MI-market-conditions}\ref{prop:MI-3}. In particular, this
  implies that $\InvExp_{\mc{M}}(K) \leq \opt{p}_{\max}$ for every
  $K \subseteq B$, and so $\rho = \max_K \InvExp_{\mc{M}}(K) \leq \opt{p}_{\max}$.

  Now consider setting $K = B^{\max}$: the first inequality in
  \cref{eq:inv-exp<max-price} holds at equality because each buyer in
  $B^{\max}$ buys at price $\opt{p}_{\max}$, while the second holds at
  equality by the tightness of $N(B^{\max})$. So
  $\InvExp_{\mc{M}}(B^{\max}) = \opt{p}_{\max}$, which shows that 
  $\rho \geq \opt{p}_{\max}$, and in fact equality holds.

  Finally, for $K = B$,
  \cref{eq:inv-exp<max-price} implies
  \[
    m(B) \leq \opt{p}_{\max}\rank(N(B)) = \rho \rank(N(B)) = m(B)
  \]
  by the assumption that $\InvExp_{\mc{M}}(B) = \rho$. Therefore,
  setting $K=B$ gives equality throughout the sums in
  \cref{eq:inv-exp<max-price}. That is, $\opt{q}_i = \opt{p}_{\max}$
  for all $i \in B$, which means that each buyer $i$ buys at price
  $\opt{p}_{\max}$. In particular, $B^{\max} = B$ and
  $S^{\max} = N(B) = E$.

  The fact that $\alg{\alpha}$ is an optimal dual follows from
  checking the KKT conditions with respect to $y^*$: \Cref{eq:KKT1} is
  satisfied trivially, and \Cref{eq:KKT2} is satisfied since we have
  shown above that $E$ is a tight set.
\end{proof}

\Cref{claim:single-round-KKT} shows that the prices we computed (for the
case of $L=1$) indeed correspond to optimal duals. Next, we extend the
proof to the case of general $L$, via an inductive
argument.

\begin{lem}[Prices are Optimal]
  \label{lem:kkt-comb-decomp}
  Consider the prices $\alg{p}_1, \ldots, \alg{p}_L$ from~\Cref{alg:cap}, and define
  \begin{align*}
    \alg{\alpha}_{E_L} &:= \alg{p}_L\\
    \alg{\alpha}_{E_{\ell}} &:= \alg{p}_\ell - \alg{p}_{\ell + 1} \qquad \ell = 1, \ldots, L-1
  \end{align*}
  and $\alg{\alpha}_S = 0$ for all other $S \subseteq E$; note that
  all $\alg{\alpha}_S \geq 0$ by \Cref{lem:algWellDefined}. Then the
  resulting $\alg{\alpha}$ forms an optimal dual solution to
  \eqref{eq:EG2}; in particular, it satisfies the KKT optimality
  conditions.
\end{lem}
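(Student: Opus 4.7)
The plan is to prove \Cref{lem:kkt-comb-decomp} by induction on the number of rounds $L$ produced by \Cref{alg:cap}. The base case $L = 1$ is exactly \Cref{claim:single-round-KKT}: when the whole buyer set $B$ achieves maximum inverse expansion $\rho = \alg{p}_1$ and $E_1 = E$, the construction yields $\alg{\alpha}_E = \rho$ with all other $\alg{\alpha}_S = 0$, which the claim already certifies as an optimal dual.

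For the inductive step, I would decompose the market into two pieces based on the first round of the algorithm. The \emph{first-round} sub-market consists of the buyers $B_1$ together with the restriction $\mc{M}|_{E_1}$, where $E_1 = \Span_{\mc{M}}(N(B_1))$. Since $B_1$ is trivially the unique maximum-inverse-expansion subset of itself in this sub-market, \Cref{claim:single-round-KKT} yields an optimal primal $y^{(1)}$ supported on $E_1$, with all prices equal to $\alg{p}_1$, and with $\sum_{e \in E_1} y^{(1)}_e = \rank_{\mc{M}}(E_1)$. The \emph{residual} sub-market consists of the buyers $B^{\text{rem}}_1 = B \setminus B_1$ with the contracted matroid $\mc{M}/E_1$; running \Cref{alg:cap} on this smaller market produces exactly $L-1$ rounds and the prices $\alg{p}_2, \ldots, \alg{p}_L$ on the nested family $\varnothing \subseteq E_2 \setminus E_1 \subseteq \cdots \subseteq E_L \setminus E_1$, so by the inductive hypothesis there is an optimal primal $y'$ for the residual market certifying optimality of the corresponding shifted duals. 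The candidate primal-dual pair for the full market is $(y^*, \alg{\alpha})$ with $y^* := y^{(1)} + y'$ (extended by zero off each support).

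The main obstacle is verifying primal feasibility of $y^*$ in $\mc{M}$. For any $S \subseteq E$, splitting $S = (S \cap E_1) \sqcup (S \setminus E_1)$, using feasibility in the two sub-markets, the contraction rank identity $\rank_{\mc{M}/E_1}(S \setminus E_1) = \rank_{\mc{M}}(E_1 \cup S) - \rank_{\mc{M}}(E_1)$, and submodularity of $\rank_{\mc{M}}$ applied to $S$ and $E_1$, yields $\sum_{e \in S} y^*_e \leq \rank_{\mc{M}}(S)$. Once feasibility holds, the remaining KKT checks are mechanical. Stationarity for each $i \in B_1$ comes directly from \Cref{claim:single-round-KKT} applied to the first-round sub-market. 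For $i \in B^{\text{rem}}_1$, one first observes that no such buyer can have $P_i \subseteq E_1$ (else $B_1 \cup \{i\}$ would strictly beat $B_1$ in inverse expansion, contradicting the maximality in \Cref{alg:cap}), so the buyer's minimum price in the full market is achieved on $P_i \setminus E_1$ and agrees with its minimum price in the residual market; stationarity then transfers from the inductive hypothesis, and any element of $P_i$ lying in $E_1$ has the strictly higher price $\alg{p}_1$, making the stationarity inequality non-binding. Finally, complementary slackness demands $\sum_{e \in E_\ell} y^*_e = \rank_{\mc{M}}(E_\ell)$ for every $\ell$ with $\alg{\alpha}_{E_\ell} > 0$; for $\ell = 1$ this is the tightness delivered by $y^{(1)}$, and for $\ell \geq 2$ it follows by adding the inductive tightness $\sum_{e \in E_\ell \setminus E_1} y'_e = \rank_{\mc{M}}(E_\ell) - \rank_{\mc{M}}(E_1)$ (via the contraction rank identity) to the $\ell = 1$ tightness.
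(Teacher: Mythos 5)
Your proof is correct and follows essentially the same strategy as the paper's: induction on the number of rounds $L$, with the inductive step splitting the market into a first-round piece on $B_1$ (handled by \Cref{claim:single-round-KKT}) and a residual piece on $B \setminus B_1$ over the contracted matroid, then gluing the two primal solutions and verifying the KKT conditions. The only presentational difference is that the paper isolates this verification into a separate ``Gluing Lemma'' (\Cref{lem:gluing}, stated with enough generality to be reused in the monotonicity proof of \Cref{thm:monotonicity}), whereas you inline the feasibility, stationarity, and complementary-slackness checks; you also take $\Span_{\mc{M}}(N(B_1))$ rather than $N(B_1)$ as the ground set for the first sub-market, a cosmetic choice that still works because the elements of $\Span(N(B_1)) \setminus N(B_1)$ lie in no part of $B_1$ and can be assigned $y_e = 0$ with price $\alg{p}_1$.
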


\begin{proof}

  We construct an optimal primal solution $\alg{y}$, such that
  $(\alg{y}, \alg{\alpha})$ form a primal/dual pair satisfying the
  KKT conditions for the convex program in \eqref{eq:EG2}.

  We proceed via induction on $L$, the number of steps in \Cref{alg:cap}. \Cref{claim:single-round-KKT} precisely proves the base case of $L = 1$. For our inductive hypothesis, assume the lemma holds true for \Cref{alg:cap} having up to $k$ iterations. We consider an instance that requires $k + 1$ iterations of the \Cref{alg:cap}. 
  Consider the two markets
\begin{itemize}[nosep]
    \item market $\Market^{(1)}$ on buyers $B_1$, where all budgets $m_i$ are equal to those from market $\Market$, and with the matroid $\mc{M}$ restricted to the ground set $E^{(1)}:= N(B_{1})$. Let $y^{(1)}$ be an optimal primal solutions. By \Cref{claim:single-round-KKT}, the vector $\alg{\alpha}^{(1)}$ where $\alg{\alpha}^{(1)}_{N(B_1)} = p_1$ (and all other sets have $\alg{\alpha}$ value 0) is an optimal dual solution to market $\Market^{(1)}$. 

    \item market $\Market^{(2)}$ on buyers $B \setminus B_1$, where all
      budgets $m_i$ are equal to those from market $\Market$, and with the
      matroid $\mc{M} / \Span(E^{(1)})$ (contracting elements
      $\Span(E^{(1)})$). We apply the inductive hypothesis on
      market $\Market^{(2)}$: Let $y^{(2)}$ be an optimal primal
      solution and let $\alg{\alpha}^{(2)}$ be the
      (optimal) dual solution arising from \Cref{alg:cap}. 
\end{itemize}
We may consider a gluing of $y^{(1)}$ and $y^{(2)}$ to create a primal solution $\alg{y}$
\begin{align*}
\alg{y}_e :=
    \begin{cases}
      y^{(1)}_e  & \text{if $e \in E^{(1)}$} \\
      0 & \text{if $e \in \Span(E^{(1)}) \setminus E^{(1)}$} \\
      y^{(2)}_e & \text{otherwise}
    \end{cases}
\end{align*}
The proof that $\alg{y}$ and $\alg{\alpha}$ satisfy the KKT conditions for market $\Market$ is an application of the gluing lemma, \Cref{lem:gluing}. \end{proof}

\begin{defn}[Canonical Duals/Prices]
  The duals and prices for \eqref{eq:EG2} generated by \Cref{alg:cap}
  are called \textit{canonical duals and prices}.
\end{defn}
\Cref{thm:comb-decomp} shows that canonical duals are optimal duals
for the convex program~\eqref{eq:EG2}. We can now extend
\Cref{thm:MI-market-conditions} to record two more properties that are
specific to canonical prices.

\begin{thm}[Two More Properties]
  \label{thm:comb-decomp}
  Let $\alpha_{S}$ be the canonical dual values for \eqref{eq:EG2} with corresponding prices $p_e$, and let $y$ be any optimal primal solution for \eqref{eq:EG2}. The following hold:
  \begin{enumerate}[label=(\alph*),nosep]
  \addtocounter{enumi}{4}
  \item\label{prop:MI-5} Let $F \subseteq E$ be a maximum-price basis of
    $\mc{M}$. Then $\sum_{e \in F} p_e =
    m(B)$. 
  \item\label{prop:MI-6} For $B' \subseteq B$ and $q_{\max} := \max_{i \in
      B'} q_i$, consider the set of elements $S := \{e \in
    N(B'): y_e > 0 \}$, and $E_>$ the elements with prices higher
    than $q_{\max}$. Then $\rank_{\mc{M} / E_>}(S) \geq
    \frac{m(B')}{q_{\max}}$.
  \end{enumerate}
\end{thm}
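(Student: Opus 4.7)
The plan is to handle (e) and (f) separately, both by exploiting the nested structure $\varnothing = E_0 \subseteq E_1 \subseteq \cdots \subseteq E_L = E$ from \Cref{alg:cap}, in which elements of $E_\ell \setminus E_{\ell-1}$ carry the canonical price $\alg{p}_\ell$, and these prices are strictly decreasing across $\ell$ by \Cref{lem:algWellDefined}.

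For (e), the key observation is that a maximum-price basis in $\mc{M}$ is built greedily in order of decreasing weight, and applied to our nested family this forces $|F \cap E_\ell| = \rank(E_\ell)$ for every $\ell$ (greedy first exhausts a basis of $E_1$, then extends to a basis of $E_2$, and so on). Hence $|F \cap (E_\ell \setminus E_{\ell-1})| = \rank(E_\ell) - \rank(E_{\ell-1})$, and a single Abel summation (with $\alg{p}_{L+1} := 0$ and $\rank(E_0) = 0$) gives
$$\sum_{e \in F} p_e \;=\; \sum_{\ell=1}^L \alg{p}_\ell\,\big(\rank(E_\ell) - \rank(E_{\ell-1})\big) \;=\; \sum_{\ell=1}^L (\alg{p}_\ell - \alg{p}_{\ell+1}) \,\rank(E_\ell) \;=\; \sum_{S \subseteq E} \alg{\alpha}_S \,\rank(S),$$
where the final equality uses that the canonical duals are supported on $\{E_\ell\}_{\ell=1}^L$. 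This last sum equals $m(B)$ by \Cref{thm:MI-market-conditions}\ref{prop:MI-4}.

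For (f), I first locate $E_>$ inside the skeleton. Because the minimum price $q_i$ of any buyer coincides with the canonical price $\alg{p}_{\ell(i)}$ of the round in which that buyer is peeled off, we can write $q_{\max} = \alg{p}_k$ for some $k$, and then $E_> = E_{k-1}$ (take $E_0 = \varnothing$ when $k = 1$) since prices strictly decrease across rounds. The crucial step is to show that $E_>$ is tight for any optimal primal $y$: when $k \geq 2$ the canonical dual $\alg{\alpha}_{E_{k-1}} = \alg{p}_{k-1} - \alg{p}_k$ is strictly positive by \Cref{lem:algWellDefined}, so \eqref{eq:KKT2} yields $\sum_{e \in E_>} y_e = \rank(E_>)$ (the $k = 1$ case is trivial). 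Combining this tightness with primal feasibility applied to $T \cup E_>$ for any $T \subseteq E \setminus E_>$ shows that $y|_{E \setminus E_>}$ is feasible in the contracted matroid $\mc{M}/E_>$; in particular $\rank_{\mc{M}/E_>}(S) \geq \sum_{e \in S} y_e$. To finish, I would lower-bound $\sum_{e \in S} y_e$ using the partition structure of $\mc{P}$ and \Cref{thm:MI-market-conditions}\ref{prop:MI-2},\ref{prop:MI-3}: each $e \in S$ lies in a unique $P_i$ with $i \in B'$; only elements of price $q_i$ in $P_i$ satisfy $y_e > 0$; and $\sum_{e \in P_i} y_e = m_i/q_i$. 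Therefore
$$\sum_{e \in S} y_e \;=\; \sum_{i \in B'} \sum_{e \in P_i} y_e \;=\; \sum_{i \in B'} \frac{m_i}{q_i} \;\geq\; \frac{m(B')}{q_{\max}}.$$

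The main obstacle is the tightness of $E_>$: for a generic optimal dual the set $E_>$ need not carry positive weight and might fail to be tight, which would break the transfer to the contracted matroid. It is therefore essential to invoke the \emph{canonical} duals produced by the skeleton, whose distinguishing feature is precisely that every $E_\ell$ receives strictly positive dual weight. Property (e) is similarly rigid, since the telescoping identity for $\sum_{e \in F} p_e$ relies on the block structure of $\alg{\alpha}$ on the nested family $\{E_\ell\}$. Once tightness of $E_>$ is established, both parts reduce to bookkeeping with the nested decomposition and the KKT properties recorded in \Cref{thm:MI-market-conditions}.
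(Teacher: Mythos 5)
Your proof of (e) coincides with the paper's: both rest on the greedy observation $|F \cap E_\ell| = \rank(E_\ell)$ for every block of the skeleton, followed by the identity $m(B) = \sum_{S:\alpha_S > 0}\alpha_S\rank(S)$ from \Cref{thm:MI-market-conditions}\ref{prop:MI-4}; your Abel summation is the same double-sum swap, just written in the opposite direction. For (f), you take a genuinely different route, and in fact one that is more careful than the paper's. The paper contracts $E_>$, appeals to the fact that the maximum inverse expansion in $\mc{M}/E_>$ is $q_{\max}$, and concludes $\rank_{\mc{M}/E_>}(N(B')) \geq m(B')/q_{\max}$, calling this ``the desired result.'' But the theorem is stated (and is used in the Expansion Lemma) for $S = \{e \in N(B') : y_e > 0\} \subseteq N(B')$, and since $\rank_{\mc{M}/E_>}(S) \leq \rank_{\mc{M}/E_>}(N(B'))$ the paper's bound on the larger set does not by itself give the bound on $\rank_{\mc{M}/E_>}(S)$. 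Your argument closes this: you identify $E_> = E_{k-1}$, note that the canonical dual $\alg{\alpha}_{E_{k-1}} = \alg{p}_{k-1} - \alg{p}_k > 0$ forces $E_>$ to be tight for any optimal primal $y$ (complementary slackness holds for any optimal primal paired with any optimal dual), deduce from tightness that $\rank_{\mc{M}/E_>}(T) \geq \sum_{e \in T} y_e$ for all $T \subseteq E \setminus E_>$, and then compute $\sum_{e\in S} y_e = \sum_{i\in B'} m_i/q_i \geq m(B')/q_{\max}$ via \Cref{thm:MI-market-conditions}\ref{prop:MI-2},\ref{prop:MI-3}. This bypasses the inverse-expansion machinery entirely in favor of a single tightness observation plus KKT, and produces the bound on exactly the set $S$ that the theorem asserts and that the Expansion Lemma requires.
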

\begin{proof}
  We prove:
  \begin{enumerate}
  \item[(e)] Let $F$ be a maximum-price basis (with respect to
    $\mc{M}$) of elements. We first claim that for any
    $S \subseteq E$, if $\alpha_{S} > 0$ (in particular,
    $S = E_\ell$ for some $\ell$), then $F$ contains $\rank(S)$
    elements from $S$. Observe that for each $E_\ell$ from the
    MI skeleton, the greedy algorithm to compute $F$
    always considers all elements of $E_\ell$ before considering
    elements of $E \setminus E_{\ell}$ (those of lower
    price). Therefore, $\abs{F \cap E_\ell} = \rank(E_\ell)$.
    
    Finally, \Cref{thm:MI-market-conditions}\ref{prop:MI-4} and the
    above claim together imply the desired:
    \[
        m(B) = \sum_{S : \alpha_{S} > 0} \alpha_{S} \cdot
        \rank(S) = \sum_{\ell} \sum_{e \in F \cap E_\ell}
        \alpha_{E_\ell} = \sum_{e \in F} \sum_{S \ni e}
        \alpha_S = \sum_{e \in F} p_e. 
    \]
    
  \item[(f)] Denote $q_{\max} := \max_{i \in B'}
    q_i$. First observe that since $y_e > 0$ for $e \in S$
    (lying in part $P_i$ for some $i \in B'$),
    \Cref{thm:MI-market-conditions}\ref{prop:MI-2} implies that
    $p_e = q_i \leq q_{\max}$. Now consider \Cref{alg:cap}, in particular, the step where all
    elements with prices strictly greater than $q_{\max}$ have been
    contracted. This is exactly the matroid $\mc{M} / E_>$. At this
    point, the maximum inverse expansion is at most $q_{\max}$
    (by \Cref{lem:algWellDefined}). Hence, for any subset of buyers
    $B'$ at this stage, we have
    \[ \frac{m(B')}{\rank_{\mc{M}/E_>}(M(B'))} \leq q_{\max}. \]
    This is precisely the desired result. \qedhere
  \end{enumerate}
\end{proof}

Part \ref{prop:MI-5} is used later in the proof of the main theorem to bound the
total price, while part \ref{prop:MI-6} gives a relation between the prices in an
arbitrary set of buyers and its expansion (in some contraction of
$\mc{M}$). 

\subsection{Monotonicity of Prices}
\label{subsec:monotonicity}

We can now show that the prices of items only increase as new buyers
arrive. Showing this for the minimum prices that buyers observe
follows from a property called \emph{competition monotonicity} that
holds for SUA markets (and hence for our MI market). However, for the proof of \Cref{thm:main-result} we will need
to use the fact that the canonical prices are monotone for {\em all} items (and not just
for the price that each buyer buys at). Note that since the item prices are not unique, monotonicity of
prices may not hold for all items.

For the purposes of this proof, we view the arrival of buyers as a change in their budgets. That is, the budget of all buyers that have arrived is 1 and the budget of all buyers that have not yet arrived is 0.
Thus, the arrival of $i^{\text{inc}}$ is represented by an increase in its budget from $m_{i^{\text{inc}}}= 0$  to $m_{i^{\text{inc}}}= 1$. Throughout this section, we assume that the MI skeleton immediately before $i^{\text{inc}}$ arrives is given by the sequence of buyers $B_1, \ldots, B_L$, and the nested family of elements $E_1 \subseteq \ldots \subseteq E_L = E$, where price $p_\ell$ is assigned to the elements of $E_\ell\setminus E_{\ell-1}$, and each buyer $i \in B_\ell$ buys at price $q_{i}^{\text{old}} = p_\ell$ (i.e., $p_\ell$ is the minimum price element in $P_i$). We will use the notation $B_{\le k}:= \cup_{j=1}^k B_j$.

\begin{thm}[Monotonicity of Prices]
  \label{thm:monotonicity}
  Let $i^{\text{inc}}$ be an incoming buyer and let
  $p^{\text{old}}, p^{\text{new}}$ be the canonical price vectors for
  the instances before and after its arrival. Then, for every item
  $e \in E$,
\begin{equation}
\label{eq:monotone}	
p^{\text{new}}_e \ge p^{\text{old}}_e 
\end{equation}
Moreover, let $q:= \min_{e \in P_{i^{\text{inc}}}}\{p_e^{\text{old}}\}$ be the minimum price that buyer $i^{\text{inc}}$ observes before arriving. If $p^{\text{old}}_e < q$, then, 
\begin{equation}
\label{eq:stability}	
p^{\text{new}}_e =p^{\text{old}}_e
\end{equation}
\end{thm}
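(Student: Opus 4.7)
The plan is to compare the MI skeletons before and after $i^{\text{inc}}$'s arrival, leveraging both SUA competition monotonicity (via \Cref{lem:MIisSUA}) and the recursive structure of \Cref{alg:cap}. The key structural observation is that since $q = \min_{e \in P_{i^{\text{inc}}}} p_e^{\text{old}}$, every element of $P_{i^{\text{inc}}}$ has old price at least $q$; letting $k$ be the level of the old skeleton with $\alg{p}_k^{\text{old}} = q$, this gives the containment $P_{i^{\text{inc}}} \subseteq E_k$. This decouples the analysis into a ``low-price tail'' $E \setminus E_k$, on which stability should hold, and a ``high-price head'' $E_k$, on which prices should only rise.

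For the stability claim \eqref{eq:stability}, I would show that \Cref{alg:cap} run on the new instance peels off every buyer in $B_{\leq k} \cup \{i^{\text{inc}}\}$---contracting exactly $E_k$---before selecting any buyer from $B \setminus B_{\leq k}$. At an intermediate step with contracted elements $\tilde{E} \subseteq E_k$, submodularity of rank gives $\rank_{\mc{M}/\tilde{E}}(N(S)) \geq \rank_{\mc{M}/E_k}(N(S))$ for $S \subseteq B \setminus B_{\leq k}$, so any such $S$ has inverse expansion at most $\alg{p}_{k+1}^{\text{old}} < q$. On the other hand, competition monotonicity in the sub-market $(B_{\leq k} \cup \{i^{\text{inc}}\}, \mc{M}|_{E_k})$ (viewing $i^{\text{inc}}$'s arrival as its budget rising from $0$ to $1$) forces $q_i^{\text{new}} \geq q_i^{\text{old}} \geq q$ for each $i \in B_{\leq k}$ and $q_{i^{\text{inc}}}^{\text{new}} \geq q$, so every level of the new sub-market skeleton has density at least $q > \alg{p}_{k+1}^{\text{old}}$ and is chosen ahead of any subset of $B \setminus B_{\leq k}$. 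Once $B_{\leq k} \cup \{i^{\text{inc}}\}$ is exhausted, the remaining instance is literally the old one from level $k+1$ onward, so \Cref{alg:cap} reproduces the same levels with the same prices.

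For the monotonicity claim \eqref{eq:monotone}, stability handles $e \in E \setminus E_k$. For $e \in E_k$, the stability argument shows $p_e^{\text{new}}$ equals the canonical price of $e$ in the new sub-market on $(B_{\leq k} \cup \{i^{\text{inc}}\}, \mc{M}|_{E_k})$, while $p_e^{\text{old}}$ equals its canonical price in the corresponding old sub-market on $B_{\leq k}$. I would prove pointwise monotonicity under adding a single buyer by induction on the number of levels of the new skeleton: the new top-level density satisfies $\alg{p}_1^{\text{new}} \geq \alg{p}_1^{\text{old}}$ because any old densest subset is still available, so each $e \in E_1^{\text{new}}$ gets new price $\alg{p}_1^{\text{new}} \geq \alg{p}_1^{\text{old}} \geq p_e^{\text{old}}$; contracting $E_1^{\text{new}}$ and removing $B_1^{\text{new}}$ yields a strictly smaller sub-problem that fits the same framework. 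The main obstacle is the inductive step, since the top level $B_1^{\text{new}}$ generally does not correspond to a clean prefix of the old skeleton; to align the recursion, the plan is to argue via \Cref{lem:algWellDefined} and an uncrossing argument that $E_1^{\text{new}}$ contains a top-prefix $E_j^{\text{old}}$ of the old skeleton (for the largest $j$ with $\alg{p}_j^{\text{old}} \geq \alg{p}_1^{\text{new}}$), whence the residual comparison again pits an ``old sub-market'' against a ``new sub-market minus top'' with one fewer new level, and monotonicity follows level-by-level.
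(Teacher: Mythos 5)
Your overall architecture for the stability claim \eqref{eq:stability} is in the same spirit as the paper's: both decompose at the level $k$ where $i^{\text{inc}}$ sits in the old skeleton, and both argue the new skeleton exhausts $B_{\leq k} \cup \{i^{\text{inc}}\}$ before touching $B \setminus B_{\leq k}$. The paper routes this through a separate Gluing Lemma (\Cref{lem:gluing}) that verifies the glued primal/dual pair satisfies KKT; your direct argument about inverse expansions is reasonable in spirit, but it has loose ends. First, your appeal to competition monotonicity does not by itself give $q_{i^{\text{inc}}}^{\text{new}} \geq q$ --- competition monotonicity controls the \emph{other} buyers' utilities, and the paper handles $i^{\text{inc}}$'s own price via a separate combinatorial argument (inside the proof of \Cref{lem:competition-monotonicity}). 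Second, knowing that every subset of $B \setminus B_{\leq k}$ has low inverse expansion and some subset of $B_{\leq k}$ has high inverse expansion does not immediately rule out that the (maximal) maximizer $B_1^{\text{new}}$ mixes buyers from both sides; one needs an uncrossing-style argument (in the spirit of \Cref{fact:fractions}, carefully applied over the correct contractions) together with the observation that no remaining buyer of $B \setminus B_{\leq k}$ has its part fully inside $E_k$.

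The real gap, however, is in your plan for the monotonicity claim \eqref{eq:monotone}. You propose inducting on the number of levels of the \emph{new} skeleton: peel $B_1^{\text{new}}$, observe $\alg{p}_1^{\text{new}} \geq \alg{p}_1^{\text{old}}$, and recurse. You correctly identify the obstacle (the new top level is not a prefix of the old skeleton) and you correctly observe that $E_1^{\text{new}} \supseteq E_j^{\text{old}}$ for the prefix $j$ with $\alg{p}_j^{\text{old}} \geq \alg{p}_1^{\text{new}}$. But the residual comparison you then set up --- new market $(B \setminus B_1^{\text{new}}, \mc{M}/E_1^{\text{new}})$ versus old market $(B \setminus B_{\leq j}^{\text{old}}, \mc{M}/E_j^{\text{old}})$ --- is not an instance of ``same market, one buyer's budget increases.'' The two residual markets have \emph{different buyer sets} (the new one is strictly smaller, since $B_1^{\text{new}}$ may strictly contain $B_{\leq j}^{\text{old}}$) and \emph{different ground sets and matroids} (the new one is more contracted). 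Removing buyers and contracting more elements push prices in opposite directions, so neither competition monotonicity nor your inductive hypothesis applies to this pair. The induction therefore does not close, and the ``whence monotonicity follows level-by-level'' step is a genuine hole rather than a detail. The paper avoids this entirely by inducting on the levels of the \emph{old} skeleton: for $e \in N(B_\ell)$ it applies competition monotonicity directly (since $p_e^{\text{old}} = q_i^{\text{old}}$ for some $i \in B_\ell$ with $e \in P_i$); for $e \in \Span_{\mc{M}_{\ell-1}}(N(B_\ell)) \setminus N(B_\ell)$ it observes that once every element of $E_{\ell-1} \cup N(B_\ell)$ has been assigned a (new) price of at least $p_\ell$, so has every element of their span, by the structure of \Cref{alg:cap}. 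This keeps the comparison on a single market throughout and sidesteps the residual-alignment problem that breaks your recursion. You should adopt that induction rather than try to patch the new-level recursion.
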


We will prove \Cref{thm:monotonicity} in two steps: first, we show monotonicity of the
prices buyers are buying at\footnote{For a buyer $i$ with a budget of 0, this price is simply the price of the cheapest element of $P_i$}. Second, we use the structure of the
MI skeleton to extend this monotonicity to all
elements.

\begin{lem}
\label{lem:competition-monotonicity}
For any buyer $i$, let $q^{\text{old}}_{i} = \min_{e \in P_{i}} p_e$ be the price it buys at immediately before the arrival of $i^{\text{inc}}$, and similarly $q^{\text{new}}_{i}$ the price it buys at immediately after. Then $q^{\text{new}}_i \geq q^{\text{old}}_i$. 
\end{lem}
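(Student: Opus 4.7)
The plan is to identify the MI market as an SUA market via \Cref{lem:MIisSUA} and invoke Jain--Vazirani's \emph{competition monotonicity}~\cite{JV10}, combined with the KKT identity $q_i = m_i/u_i$ from \Cref{thm:MI-market-conditions}\ref{prop:MI-3}. I would view the arrival of $i^{\text{inc}}$ as a change of budget from $m_{i^{\text{inc}}} = 0$ to $m_{i^{\text{inc}}} = 1$, with all other budgets held fixed. For each existing buyer $i \neq i^{\text{inc}}$, competition monotonicity gives $u_i^{\text{new}} \le u_i^{\text{old}}$, and since $m_i$ is unchanged, $q_i^{\text{new}} = m_i/u_i^{\text{new}} \ge m_i/u_i^{\text{old}} = q_i^{\text{old}}$.

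For $i^{\text{inc}}$ itself, competition monotonicity does not directly apply (it governs only \emph{other} buyers' utilities), so I would argue through the MI skeleton. Let $\ell^*$ be the smallest index with $P_{i^{\text{inc}}} \subseteq E_{\ell^*}$ in the pre-arrival skeleton (otherwise $q_{i^{\text{inc}}}^{\text{old}} = 0$ and the claim is trivial), so $q_{i^{\text{inc}}}^{\text{old}} = \alg{p}_{\ell^*}$. Since $P_{i^{\text{inc}}} \subseteq \Span_{\mc{M}}(N(B_{\le \ell^*}))$, the set $K := B_{\le \ell^*} \cup \{i^{\text{inc}}\}$ satisfies $\rank(N(K)) = \rank(N(B_{\le \ell^*}))$, so in the post-arrival market
\[
\InvExp_{\mc{M}}(K) = \frac{m(B_{\le \ell^*}) + 1}{\rank(N(B_{\le \ell^*}))} > \frac{m(B_{\le \ell^*})}{\rank(N(B_{\le \ell^*}))} \ge \alg{p}_{\ell^*},
\]
where the last inequality uses that $m(B_{\le \ell^*})/\rank(N(B_{\le \ell^*}))$ is a convex combination of the strictly decreasing sequence $\alg{p}_1 > \dots > \alg{p}_{\ell^*}$ (each term being at least $\alg{p}_{\ell^*}$). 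The existence of such a dense set containing $i^{\text{inc}}$ should force $i^{\text{inc}}$ to be peeled off in some post-arrival round $\ell'$ of density $\alg{p}'_{\ell'} \ge \alg{p}_{\ell^*}$, yielding $q_{i^{\text{inc}}}^{\text{new}} = \alg{p}'_{\ell'} \ge q_{i^{\text{inc}}}^{\text{old}}$.

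The main obstacle will be justifying this last implication: ``existence of a set of inverse expansion $\ge \rho$ containing $i^{\text{inc}}$ in $\mc{M}$ forces $i^{\text{inc}}$ to be peeled in a post-arrival round of density $\ge \rho$.'' This is a standard property of principal-partition / densest-subset decompositions for submodular functions, but formalizing it cleanly requires a submodular uncrossing argument showing that, after any prefix of peeling rounds, the remnant of a dense set inherits a density at least as large in the contracted matroid $\mc{M}/E'_{\ell-1}$. An alternative route avoiding the skeleton entirely would introduce a ``ghost'' buyer of budget $\epsilon \to 0$ whose observations shadow those of $i^{\text{inc}}$, apply SUA competition monotonicity to the ghost (for whom the ``other buyer'' clause now applies), and take a limit; this is conceptually simpler but requires modifying the partition matroid so the ghost's part is disjoint, and a continuity argument to justify the limit.
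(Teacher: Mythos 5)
The first half of your proof (monotonicity for every buyer $i \neq i^{\text{inc}}$) is exactly the paper's: realize the MI market as an SUA market via \Cref{lem:MIisSUA}, invoke Jain--Vazirani competition monotonicity to get $u_i^{\text{new}} \le u_i^{\text{old}}$, and read off $q_i^{\text{new}} \ge q_i^{\text{old}}$ from $q_i = m_i/u_i$. No issues there.

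For $i^{\text{inc}}$ itself you correctly isolate the key structural fact---$P_{i^{\text{inc}}}$ is contained in $\Span_{\mc{M}}(N(B_{\le \ell^*} \setminus i^{\text{inc}}))$, which follows because adding the budget-$0$ buyer $i^{\text{inc}}$ to the peel set can only be maximality-preserving if it adds no rank---but then route through a ``dense set containing $i^{\text{inc}}$ must be peeled at high density'' claim, and you yourself flag that this needs a separate submodular-uncrossing / principal-partition argument. That extra machinery is avoidable, and the paper avoids it. The cleaner route, which you already have all the ingredients for, is to use the monotonicity you \emph{just proved} for the other buyers: every $j \in B_{\le \ell^*} \setminus i^{\text{inc}}$ satisfies $q_j^{\text{new}} \ge q_j^{\text{old}} \ge \alg{p}_{\ell^*}$, so in the post-arrival skeleton every such $j$ is peeled in a round of price at least $\alg{p}_{\ell^*}$. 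Once all of them are peeled (say by round $\ell'$), the contracted set $E'_{\ell'}$ contains $N(B_{\le \ell^*} \setminus i^{\text{inc}})$ and is span-closed, hence contains all of $P_{i^{\text{inc}}}$; by \Cref{lem:algWellDefined}(b) a remaining buyer must have a nonempty neighborhood, so $i^{\text{inc}}$ must already have been peeled by round $\ell'$, i.e.\ at a price $\ge \alg{p}_{\ell^*} = q_{i^{\text{inc}}}^{\text{old}}$. This substitutes a two-line appeal to what you already established for a not-yet-proved density-persistence lemma; the ghost-buyer limiting argument you sketch as an alternative is also plausible but needlessly heavy.
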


\begin{proof}
Recall that by \Cref{lem:MIisSUA}, the matroid intersection market is an SUA market. Jain and Vazirani~\cite{JV07} proved the following monotonicity result about SUA markets: 
\begin{fact}[Competition monotonicity]
At the addition of a new buyer $i^{\text{inc}}$, no other buyer's utility $u_i$, defined in MI markets as $u_i := \sum_{e \in P_i} y_e$, increases. 
\end{fact}
An immediate corollary of this fact and \Cref{thm:MI-market-conditions}, part \ref{prop:MI-3} is that for all buyers $i$ that arrive before $i^{\text{inc}}$, we have $q^{\text{new}}_i \geq q^{\text{old}}_i$.

It remains to show that the minimum price of elements in $P_{i^{\text{inc}}}$ also does not decrease. To see this, suppose that (immediately before its arrival) $i^{\text{inc}}$ is in $B_k$.  Then, the MI skeleton implies that
\begin{equation}
\label{invExpIneq}
    \InvExp^{(k)} (B_{k}) \ge \InvExp^{(k)} (B_{k}\setminus  i^{\text{inc}} ),
\end{equation}
but since $m(B_{k}) =m(B_{k}\setminus  i^{\text{inc}})$, we must have  $\rank^{(k)} (N(B_{k})) = \rank^{(k)} (N(B_{k}  \setminus i^{\text{inc}}))$, or $ i^{\text{inc}}$ would not be in $B_{k}$.
Therefore, $P_{i^{\text{inc}}} \subseteq \Span(N(B_{\le k} \setminus i^{\text{inc}}))$. Therefore, since the prices paid by buyers in $B_{\le k} \setminus i^{\text{inc}}$ do not decrease after the arrival of $i^{\text{inc}}$, once all of these buyers are peeled off, $i^{\text{inc}}$ must be as well, at a price no smaller than its price was before its arrival.
\end{proof}

Now we can prove monotonicity of the prices of all elements.

\begin{proof}[Proof of \Cref{thm:monotonicity}] 
We first prove \Cref{eq:monotone} and specifically that $p_e^{\text{new}} \geq p_e^{\text{old}}$ for each $e \in E_\ell \setminus E_{\ell-1}$ by induction on $\ell$. There are two cases: since $e$ was contracted at step $\ell$ of \Cref{alg:cap}, we must either have that $e \in N(B_\ell)$, or $e \in \Span_{\mc{M}_{\ell-1}}(N(B_\ell)) \setminus N(B_\ell)$.
\begin{enumerate}
    \item If $e \in N(B_\ell)$, then $e \in N(i)$ for some $i \in B_{\ell}$, and $p_e^{\text{old}} = p_\ell = q_{i}^{\text{old}}$. But then by \Cref{lem:competition-monotonicity}, 
    \[
        p^{\text{new}}_e \geq q^{\text{new}}_{i} \geq q_{i}^{\text{old}} = p_e^{\text{old}}.
    \]

    \item If $e \in \Span_{\mc{M}_{\ell-1}}(N(B_\ell))  \setminus N(B_\ell)$, then we have that $e \in \Span_{\mc{M}}(E_{\ell-1} \cup N(B_\ell))$, by the definition of matroid contraction. By induction and case 1 above, we know that all elements $e'$ in $E_{\ell-1} \cup N(B_\ell)$ have new price at least $p_{e'}^{\text{new}} \geq p_{e'}^{\text{old}} \geq p_\ell = p_e^{\text{old}}$. Moreover, by design of \Cref{alg:cap}, we must have that after the arrival,
    \[
        p^{\text{new}}_e \geq \min_{e' \in E_{\ell-1} \cup N(B_\ell)} p^{\text{new}}_{e'} \geq p_e^{\text{old}}
    \]
    where the first inequality follows because once all elements in $E_{\ell-1} \cup N(B_{\ell})$ have been assigned a price, then so have the elements in their span, including $e$. 
\end{enumerate}

It remains to prove \Cref{eq:stability}, that $p^{\text{new}}_e =p^{\text{old}}_e$ if $ p^{\text{old}}_{e} < q $. Recall that when running \Cref{alg:cap}, immediately before $i^{\text{inc}}$'s arrival,  $i^{\text{inc}}$ is peeled off at step $k$ with corresponding price $p_k$.  We claim that this implies that in the MI skeleton with $ m(i^{\text{inc}}) = 1$, all buyers of $B^{(1)} := B_{\le k}$ are peeled off before any buyer $i \in B^{(2)} := B \setminus B^{(1)}$, and hence the prices of all elements with $p_e^{\text{old}} < p_k$   are unchanged. 

To prove this, we consider two submarkets of the MI market after the arrival of $i^{\text{inc}}$ : (1) market instance  $\Market^{(1)}$ with  buyers $B^{(1)}$, elements $E^{(1)} := N(B^{(1)})$, with matroid $\mc{M}^{(1)} := \mc{M}|_{E^{(1)}}$ and partition matroid $\mc{P}|_{E^{(1)}}$, and (2) market instance $\Market^{(2)}$ with buyers $B^{(2)} := B \setminus B^{(1)}$, $E^{(2)} := E \setminus \Span(E^{(1)})$,  matroid $\mc{M}^{(2)} := \mc{M} / \Span(E^{(1)})$, and partition matroid $\mc{P}|_{E^{(2)}}$. Budgets in both markets are precisely those immediately after the arrival of $i^{inc}$. Note that deriving an MI skeleton from each of these markets separately yields optimal primal and dual solutions $y^{(1)}, \alpha^{(1)}$ (for market $\Market^{(1)}$) and $y^{(2)}, \alpha^{(2)}$ (for market $\Market^{(2)}$) where all prices in market $\Market^{(1)}$ are at least $p_{k}$ and all prices buyers in market $\Market^{(2)}$ buy at are exactly as they were from the MI skeleton on that market immediately before $i^{inc}$ arrived. Note also that the elements in  $\Span(E^{(1)})\setminus E^{(1)}$ are not included in either market.
To complete the proof then, it suffices to show that these two solutions can be ``glued together'' to obtain an optimal primal and dual solution to the full market after the arrival of $i^{\text{inc}}$, where the prices of all elements in $E^{(2)}$ are given by  the optimal solution to market $\Market^{(2)}$ and thus are unchanged relative to the prices before $i^{inc}$ arrived. This is shown in \Cref{lem:gluing}.
\end{proof}

\subsection{The Expansion Lemma}
\label{sec:expansion-matroids}

We now prove our expansion lemma, which relates the length of
augmenting paths in the exchange graph to prices of elements. 

\begin{lem}\label{lem:laminar-expansion}
  Let $\Ind$ be an arbitrary independent set in the intersection
  $\mc{P} \cap \mc{M}$, and let $p \in \RR^{|E|}_{\geq 0}$ be
  the canonical prices. Then, for any element $e^* \in E$ with current price
  $p_{e^*} \in [0, 1)$ such that $\Ind \cup e^* \in \mc{P}$, there is an augmenting path from $e^*$ of
  length at most $O\big(\frac{\ln n}{1 - p_{e^*}}\big)$.
\end{lem}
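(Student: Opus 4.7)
The plan is to mirror the matching Expansion Lemma (\Cref{lem:expansion}), with two key adaptations: replace set cardinalities by matroid ranks in a suitable contracted matroid, and use \Cref{thm:comb-decomp}\ref{prop:MI-6} for the rank-expansion bound in place of the direct counting argument that worked for matchings. Let $E_> := \{e \in E : p_e > p_{e^*}\}$ and work in the contracted matroid $\tilde{\mc{M}} := \mc{M}/E_>$; note $e^* \notin E_>$ so $e^*$ survives the contraction.

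I would inductively build a nested sequence $R_1 \subseteq R_2 \subseteq \cdots$ starting with $R_1 := \{e^*\}$. Given $R_k$, let $L_k := (R_k \cap \Ind) \cup \{y \in \Ind : \exists\, x \in R_k \setminus \Ind,\ \Ind - y + x \in \mc{M}\}$ be the $\Ind$-elements reached from $R_k$ via $\mc{M}$-exchange (the analogue of $L_k = M^{-1}(R_k)$ in the matching proof). Let $B_k := \{i : \Ind \cap P_i \in L_k\}$ be the corresponding buyers, and set $R_{k+1} := R_k \cup \{e : y_e > 0 \text{ and } e \in P_i \text{ for some } i \in B_k\}$, where $y$ is the canonical primal solution to~\eqref{eq:EG2}. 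Either some $R_k$ reaches an $\mc{M}$-free element---yielding a short augmenting path from $e^*$ in the exchange graph---or the sets keep growing in rank.

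I would then prove inductively that, so long as none of $R_1,\ldots,R_k$ contains a free element, every element of $R_k$ has canonical price at most $p_{e^*}$, and $\rank_{\tilde{\mc{M}}}(R_k) \geq (1/p_{e^*})^{k-1}$. The price bound forces $q_i \leq p_{\Ind \cap P_i} \leq p_{e^*}$ for each $i \in B_k$, which is exactly the hypothesis $q_{\max, B'} \leq p_{e^*}$ needed to apply \Cref{thm:comb-decomp}\ref{prop:MI-6} with $B' = B_k$; this yields $\rank_{\tilde{\mc{M}}}(\{e \in N(B_k) : y_e > 0\}) \geq m(B_k)/p_{e^*} = |B_k|/p_{e^*}$. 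Since this set is contained in $R_{k+1}$, and $\rank_{\tilde{\mc{M}}}(R_k) \leq |L_k| = |B_k|$ (because $R_k \subseteq \Span_{\mc{M}}(L_k)$ via fundamental circuits, and $L_k \subseteq \Ind$ is independent), we obtain the recursion $\rank_{\tilde{\mc{M}}}(R_{k+1}) \geq \rank_{\tilde{\mc{M}}}(R_k)/p_{e^*}$. Combining with $\rank_{\tilde{\mc{M}}}(R_k) \leq n$ and $\ln(1/p) \geq 1-p$ for $p \in (0,1]$ yields $k = O(\ln n/(1-p_{e^*}))$, and the terminal $R_k$ then contains a free element, producing an augmenting path of length $O(k)$ in the exchange graph.

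The hard part will be establishing the price bound along the $\mc{M}$-exchange BFS: if $x \in R_k$ has price at most $p_{e^*}$ and $y \in \Ind$ satisfies $\Ind - y + x \in \mc{M}$, why does $y$ also have canonical price at most $p_{e^*}$? This is not automatic for arbitrary $\Ind$, and would rely on the flat structure of the nested family $E_1 \subseteq E_2 \subseteq \cdots$ produced by \Cref{alg:cap}---essentially, that the low-price elements form a substructure closed under the relevant $\mc{M}$-exchanges. A clean route is to run the BFS directly inside $\tilde{\mc{M}}$ using the contracted independent set $\Ind \setminus E_>$, where $\mc{M}/E_>$-exchange automatically stays within $E \setminus E_>$, and then lift the resulting augmenting path back to an augmenting path in $\mc{M}$ of the same length via the standard correspondence between augmenting paths in a matroid intersection and its contraction; this is the technical heart of the argument.
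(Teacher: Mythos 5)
Your high-level plan is correct and closely follows the paper's structure: build a BFS over circuits in $\mc{M}$ relative to $\Ind$, apply \Cref{thm:comb-decomp}\ref{prop:MI-6} for the growth step, bound $\rank(R_k)$ by the number of touched $\Ind$-elements, and conclude with the geometric-growth argument. You also correctly identify the technical crux: circuit elements $y \in \circuit(x, \Ind)$ may have price exceeding $p_{e^*}$ even when $p_x \leq p_{e^*}$, so the price bound on $L_k$/buyers $B_k$ that feeds into \ref{prop:MI-6} does not come for free. But your proposed resolutions do not close this gap. Route~(1) (``low-price elements form a substructure closed under $\mc{M}$-exchanges'') is simply false for arbitrary $\Ind$, and route~(2) (run the BFS inside $\tilde{\mc{M}} := \mc{M}/E_>$ with $\Ind \setminus E_>$, then lift) stumbles on two points: $\Ind \setminus E_>$ need not be independent in $\mc{M}/E_>$ (a high-price element outside $\Ind$ can, after contraction, create a circuit among low-price $\Ind$-elements), and even if it were, there is no off-the-shelf correspondence between augmenting paths for $(\mc{P}|_{E\setminus E_>}, \mc{M}/E_>, \Ind\setminus E_>)$ and those for $(\mc{P}, \mc{M}, \Ind)$ of the same length.

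The paper sidesteps the issue entirely rather than proving the price bound you worry about. Define $\Left_k$ using only the \emph{low-price} circuit elements $K_\leq := \circuit(\Right_k, \Ind)_{\leq p_{e^*}}$, so the price bound $q_i \leq p_{e^*}$ for $i \in \Left_k$ is true by construction. The nontrivial step is then to show this restricted $\Left_k$ is still large enough, namely $\rank_{\mc{M}/E_>}(\Right_k) \leq |K_\leq|$. This follows from a short contraction-plus-submodularity argument: if not, by the exchange property some $e \in \Right_k$ has $K_\leq \cup \{e\}$ independent in $\mc{M}/E_>$, so $\rank_{\mc{M}}(K_\leq \cup \{e\} \cup E_>) > \rank_{\mc{M}}(K_\leq \cup E_>)$; since the full circuit set $K := \circuit(\Right_k, \Ind)$ satisfies $K \subseteq K_\leq \cup E_>$, this gives $\rank_{\mc{M}}(K \cup e) > \rank_{\mc{M}}(K)$, contradicting that $K \cup e$ contains the circuit $\circuit(e,\Ind)\cup\{e\}$ while $K \subseteq \Ind$ is independent. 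This inequality is the genuinely new ingredient relative to the matching proof, and it is what replaces the price-monotonicity property you (correctly) noted does not hold. The rest of your argument, modulo this fix, matches the paper's.
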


Just as in the case of matchings, we define a sequence of sets of
buyers $\{\Left_k\}_k$ and of items $\{\Right_k\}_k$ inductively,
based on the independent set $\Ind$ and the prices $p$. Firstly,
for an element $e \in E$, let $\circuit(e, \Ind) \sse \Ind$ be the elements
from $\Ind$ in the circuit formed by adding $e$ to $\Ind$. (If
$\Ind \cup \{e\} \in \cM$ then $\circuit(e, \Ind) = \emptyset$.) Now, for any
$S \sse E$, we define
\[ \circuit(S, \Ind) := \bigcup_{e \in S} \circuit(e, \Ind). \] Secondly, for a
set $S \sse E$, let
$S_{\leq \tau} := \{ e \in S \mid p_e \leq \tau\}$ denote the
set of elements in $S$ which have price less than or equal to some
threshold $\tau$. With this, we may define our sequences of
sets: 
\begin{align*}
  \Right_1 &:= \{ e^* \} \\
  \Left_k &:= \{ i \in B \mid P_i \cap \circuit(\Right_k, \Ind)_{\leq p_{e^*}} \neq \emptyset \} \\
  \Right_{k + 1} &:= \{ e \in N(\Left_k) \mid y_e > 0 \}.
\end{align*}
In order to prove~\Cref{lem:laminar-expansion}, we need the following
expansion claim.

\begin{claim}
  If $\Right_1, \ldots, \Right_{k}$ do not contain a free element with
  respect to $\cM$, then
  \begin{enumerate}[nosep,label=(\alph*)]
  \item $p_e \leq p_{e^*}$ for all $e \in \Right_{k+1}$, and
  \item $\rank_\cM(\Right_{k+1}) \geq (\nicefrac{1}{p_{e^*}})^{k}$.
  \end{enumerate}
\end{claim}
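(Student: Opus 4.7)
The plan is to prove (a) together with a slight strengthening of (b) simultaneously by induction on $k$. Let $F := \{e \in E : p_e > p_{e^*}\}$; this is exactly the flat $E_{\ell^*-1}$ of the matroid intersection skeleton, where $\alg{p}_{\ell^*} = p_{e^*}$. I will prove (b$'$) $\rank_{\cM/F}(\Right_{k+1}) \ge (1/p_{e^*})^k$, from which (b) follows by $\rank_\cM(\Right_{k+1}) \ge \rank_{\cM/F}(\Right_{k+1})$. The base case $k=0$ is $\Right_1 = \{e^*\}$: (a) is trivial, and since $F$ is a flat not containing $e^*$, we have $\rank_{\cM/F}(\{e^*\}) = 1$.

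For the inductive step, part (a) proceeds exactly as in the matching case: any $e \in \Right_{k+1}$ lies in $P_i$ for some $i \in \Left_k$ with $y_e > 0$, so $p_e = q_i$ by \Cref{thm:MI-market-conditions}\ref{prop:MI-2}; the definition of $\Left_k$ exhibits an $e' \in P_i$ with $p_{e'} \leq p_{e^*}$, giving $p_e = q_i \le p_{e'} \le p_{e^*}$.

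The crux of (b$'$) is the inequality $|\Left_k| \ge \rank_{\cM/F}(\Right_k)$. Since $\Ind$ is $\cP$-independent, distinct elements of $\circuit(\Right_k, \Ind) \sse \Ind$ lie in distinct parts, so $|\Left_k| = |\circuit(\Right_k, \Ind)_{\le p_{e^*}}|$. From $\Right_k \sse \Span_\cM(\circuit(\Right_k, \Ind))$ and the split $\circuit(\Right_k, \Ind) = \circuit(\Right_k, \Ind)_{\le p_{e^*}} \cup (\circuit(\Right_k, \Ind) \cap F)$, we get $\Right_k \sse \Span_\cM(F \cup \circuit(\Right_k, \Ind)_{\le p_{e^*}})$; combined with $\Right_k \cap F = \emptyset$ (from the inductive (a) applied to $\Right_k$), this descends to the contraction as $\Right_k \sse \Span_{\cM/F}(\circuit(\Right_k, \Ind)_{\le p_{e^*}})$, whence $\rank_{\cM/F}(\Right_k) \le |\circuit(\Right_k, \Ind)_{\le p_{e^*}}| = |\Left_k|$.

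Finally, applying \Cref{thm:comb-decomp}\ref{prop:MI-6} with $B' = \Left_k$ (so $S = \Right_{k+1}$, $q_{\max} \le p_{e^*}$, and hence $F \sse E_>$) yields $\rank_{\cM/E_>}(\Right_{k+1}) \ge |\Left_k|/q_{\max}$; since $F \sse E_>$, contracting the larger set $E_>$ can only reduce rank further, so $\rank_{\cM/F}(\Right_{k+1}) \ge \rank_{\cM/E_>}(\Right_{k+1}) \ge |\Left_k|/p_{e^*} \ge \rank_{\cM/F}(\Right_k)/p_{e^*} \ge (1/p_{e^*})^k$, closing the induction. The main obstacle is choosing the right invariant: inducting directly on $\rank_\cM$ does not close the loop, because \Cref{thm:comb-decomp}\ref{prop:MI-6} delivers lower bounds only in the contraction $\cM/E_>$, and the direction of the comparison between $\rank_\cM(\Right_k)$ and $\rank_{\cM/E_>}(\Right_k)$ is wrong. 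Fixing the contraction $\cM/F$ (which depends only on $p_{e^*}$) and exploiting $F \sse E_>$ together with submodularity moves the inequalities in our favor.
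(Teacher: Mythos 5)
Your proof is correct and follows the same overall strategy as the paper: bound $|\Left_k|$ by the rank of $\Right_k$ in the contraction, then invoke \Cref{thm:comb-decomp}\ref{prop:MI-6} to expand. The one place where you genuinely improve on the paper's write-up is the explicit strengthening of (b) to (b$'$) $\rank_{\cM/F}(\Right_{k+1}) \ge (1/p_{e^*})^k$. The paper states the inductive claim as $\rank_\cM(\Right_{k+1}) \ge (1/p_{e^*})^k$ but its chain of inequalities actually uses $\rank_{\cM_\le}(\Right_k) \ge (1/p_{e^*})^{k-1}$ — which is not implied by the stated hypothesis, since contraction only decreases rank. Your version of the invariant, together with the observation that $F=E_{\ell^*-1}$ is a flat not containing $e^*$ so the base case holds at rank one in the contraction, makes the induction airtight. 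Your direct spanning argument for $\rank_{\cM/F}(\Right_k) \le |\Left_k|$ (passing through $\Right_k \sse \Span_\cM(F \cup K_\le)$ and descending to $\cM/F$) is a clean reformulation of the paper's proof by contradiction via the exchange property; both hinge on the same fact, namely that each non-free $e \in \Right_k$ lies in $\Span_\cM(\circuit(e,\Ind))$. Your final step, noting $F \sse E_>(q_{\max})$ and that contracting a larger flat only reduces rank, is exactly the (implicit) step the paper needs to match the contraction in \Cref{thm:comb-decomp}\ref{prop:MI-6} with $\cM_\le$, so well spotted.
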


\begin{proof}
  We proceed by induction on $k$. The base case is
  $k=0$, in which case both properties are true for the singleton set
  $\Right_1 = \{e^*\}$. Now, to inductively prove property~(a): consider
  an element $e \in \Right_{k+1}$, and say it belongs to part
  $P_i$. Since $y_e > 0$,
  \Cref{thm:MI-market-conditions}\ref{prop:MI-2} implies that
  $p_e = q_i$. Moreover, $i \in \Left_k$ by definition, so part $P_i$ contains an
  element from $\circuit(\Right_k, \Ind)_{\leq p_{e^*}}$, which in
  particular implies that it contains an element with price at most
  $p_{e^*}$ and so $q_i \leq p_{e^*}$. Putting these two facts
  together shows $p_e \leq p_{e^*}$, and hence part~(a).  To prove
  part~(b): for brevity, define the sets
  \[ K := \circuit(\Right_k, \Ind) \qquad \qquad \text{and} \qquad
    \qquad K_\leq := \circuit(\Right_k, \Ind)_{\leq p_{e^*}}. \]
  Since $K_\leq \sse K \sse \Ind$, and $\Ind$ is independent in the
  matroid $\cI$, we have $\rank_\cI(K_\leq) = |K_\leq|$. This means
  that each $i \in \Left_k$ has exactly one element in
  $P_i \cap K_\leq$, and therefore $|\Left_k| = |K_\leq|$.  Let $E_>$
  be all the elements of price greater than $p_{e^*}$, and consider
  the contraction $\cM_\leq := \mc{M} / E_>$. We claim that
  \begin{gather}
    \rank_{(\cM_\leq)}(\Right_k) \leq \abs{K_{\leq}} =
    \abs{\Left_k}. \label{eq:claim-mono}
  \end{gather}
  Before we prove the claim in~\eqref{eq:claim-mono}, let us use it to
  complete the inductive proof for part~(b). Indeed,
  \Cref{thm:comb-decomp}\ref{prop:MI-6}, along with the fact that no client of
  $\Left_k$ buys at price greater than $p_{e^*}$, tells us that
  \[
    \rank_{(\cM_\leq)}(\Right_{k + 1}) \geq
    \frac{\abs{\Left_k}}{p_{e^*}} .
  \]
  Using~\eqref{eq:claim-mono} now proves the second part of the
  inductive claim:
  \[ \abs{\Right_{k + 1}} \geq \rank_{(\cM_\leq)}(\Right_{k + 1}) \geq
    \frac{\rank_{(\cM_\leq)}(\Right_k)}{p_{e^*}} \geq
    \left(\frac{1}{p_{e^*}}\right)^k. \]

  Finally, it remains to prove the inequality in~\eqref{eq:claim-mono}.
  For the sake of contradiction, suppose
  \[ |\Right_k| \geq \rank_{(\cM_\leq)}(\Right_k) > \abs{K_{\leq}} \geq \rank_{(\cM_\leq)}(K_{\leq}) 
    . \] By the matroid exchange property,
  there is some $e \in \Right_k$ for which $K_{\leq} \cup \{e\}$ is
  independent. But by the definition of matroid contraction, this
  implies that in our un-contracted matroid $\mc{M}$ we have
  \[\rank_{\mc{M}}(K_{\leq} \cup \{e\} \cup E_>) >
    \rank_{\mc{M}}(K_{\leq} \cup E_>).\] Since
  $K \subseteq K_{\leq} \cup E_>$, submodularity gives
  $\rank_{\mc{M}}(K \cup e) > \rank_{\mc{M}}(K)$. But this contradicts
  the fact that $K$ is independent, while $K \cup e$ contains the
  circuit $\circuit(e, \Ind)$ by construction.
\end{proof}

\begin{proof}[Proof of \Cref{lem:laminar-expansion}]
  Indeed, if $\Right_k$ does contain a free element with respect to
  $\cM$, then there is a path in the exchange graph from $e^*$ to this
  free element of length at most $2k$. Indeed, each element
  $e \in \Right_k$ is contained in $P_i$ for some buyer
  $i \in \Left_{k-1}$, and $i$ has a neighbor
  $e' \in \circuit(\Right_{k-1}, \Ind)_{\leq p_{e^*}} \subseteq
  \Ind$. This means that there is some $e'' \in \Right_{k-1}$ for
  which $\Ind -e' + e'' \in \mc{M}$, and $\Ind - e' + e \in \mc{P}$.
  Inductively, $e''$ is reachable from $e^*$ by an alternating path of
  length $2(k-1)$, so therefore $e$ is reachable by an alternating
  path of length $2k$. Such an alternating path to a free element
  implies that the \emph{shortest} AP to a free element is of length
  no more than $2k$.

  On the other hand, if $\Right_k$ contains no free element, then we
  have
  $(\nicefrac{1}{p_{e^*}})^{k-1} \leq \rank_{\cM}(\Right_k) \leq
  \abs{\Left_k} \leq n$. Taking logs, we get that $k$ is at most
  $\smash{O(\max\{1, \frac{\ln n}{\ln(1/p_{e^*})}\})} = O(\frac{\ln
    n}{1-p_{e^*}})$ when the process ends with an augmenting path of
  length at most $2k$, hence proving~\Cref{lem:laminar-expansion}.
\end{proof}

\subsection{Bounding the Total Augmentation Cost}

\begin{proof}[Proof of \Cref{thm:main-result}]
When the $i$th buyer arrives, say the minimum price neighbor before its arrival (i.e. while it has 0 budget) has price $q_{\min}$. Denote the maximum-price basis before arrival $i$ as $F_{i-1}$, having elements of prices $f_1^{(i-1)} \geq \ldots \geq f_{R}^{(i-1)}$ (where $R$ denotes the rank of $\mc{M}$), and similarly for $F_i$ after arrival $i$. $\Delta f_z^{(i)} := f_z^{(i)} - f_z^{(i-1)} \geq 0$, by \Cref{thm:monotonicity} (monotonicity of prices). Then, using the Expansion lemma (\Cref{lem:laminar-expansion}), the length $\ell_i$ of the shortest augmenting path upon arrival $i$ is at most
\[
    \ell_i \leq O\left(\frac{\ln n}{1-q_{\min}} \right)= O\left(\frac{\ln n}{1-q_{\min}} \right) \cdot \sum_{z = 1}^{R} \Delta f_z^{(i)} \leq O(\ln n) \cdot \sum_{z = 1}^{R} \frac{\Delta f_z^{(i)}}{1-f_z^{(i-1)}}.
\]
where the equality follows from \Cref{thm:comb-decomp}\ref{prop:MI-5} (since the total budget increases by 1 at the arrival of $i$), and the final inequality again from \Cref{thm:monotonicity}, since either $\Delta f_z^{(i)} = 0$, or $q_{\min} \leq f_z^{(i-1)}$. 

Denote $\widehat{f}_z := \max\{f_z^{(i)} : f_z^{(i)} < 1\}$. Notice that by \Cref{thm:comb-decomp}\ref{prop:MI-5} at most $n$ elements in $F_n$ can have final price $f_z^{(n)} = 1$. The total cost of all augmenting paths is at most:
\begin{align*}
    \sum_{i=1}^{n}\ell_i &\leq O(\ln n) \sum_{z = 1}^{R} \sum_{i = 1}^n \frac{\Delta f_z^{(i)}}{1-f_z^{(i-1)}} \\
    &\leq O(\ln n) \sum_{z = 1}^{R} \left(\mathds{1}_{(z \leq n)} + \int_{p=0}^{\widehat{f}_z}\frac{dp}{1-p}\right) \\
    &= O(n\ln n) + O(\ln n) \sum_{z = 1}^{R} \ln\left( \frac{1}{1-\widehat{f}_z} \right)
\end{align*}

It remains to upper bound the right hand term by $O(n \ln^2 n)$. Here we give a more careful argument of this fact than in \S\ref{sec:matchings}. We can compute the maximum value the sum could possibly attain, given certain constraints on the prices $\widehat{f}_z$. Observe that at all times, prices satisfy $p_e\leq \frac{n}{n+1}$ or $p_e=1$ for every element~$e$. Indeed, \Cref{alg:cap} assigns $e$ a price $p = \frac{m(B')}{\rank^{(\ell)}(N(B'))}$ for some $\ell$ and some set $B'$ of buyers. If $p < 1$, then $\rank^{(\ell)}(N(B')) \geq m(B') + 1$, so $p \leq \frac{m(B')}{m(B') + 1} \leq \frac{n}{n+1}$. Also from part~\ref{prop:MI-5} of \Cref{thm:comb-decomp}, $\sum_{z} \widehat{f}_z \leq \sum_{z} f_z^{(n)} = n$. Now, we have
\[
    \left\{\begin{array}{ll}
        \max_p &\sum_{z = 1}^R \ln\left(\frac{1}{1-p_z}\right) \\
        \text{s.t.} &\sum_{z = 1}^R p_z \leq n\\
        &p_z \leq \frac{n}{n+1}
    \end{array}\right\} = \left\{\begin{array}{ll}
        \max_p &\sum_{z = 1}^R \ln\left(\frac{1}{1-p_z}\right) \\
        \text{s.t.} &\sum_{z = 1}^R p_z = n\\
        &p_z \leq \frac{n}{n+1}
    \end{array}\right\} = (n+1)\ln(n+1).
\]
The first equality follows from monotonicity of the objective. The second comes from the following mass-shifting argument: given two elements with prices $p, q$ and $p \geq q$, let their sum be $p + q = D$. Then 
\[
    \frac{d}{dp}\left[\log\left(\frac{1}{1-p}\right) + \log\left(\frac{1}{1-(D-p)}\right)\right] = \frac{1}{1-p} + \frac{1}{1-(D-p)} \geq 0.
\]
In particular, we can increase the sum by moving price-mass from $p_i$ to $p_j$ for any $p_i < p_j$. Therefore the maximizing $p$ has $p_z = \frac{n}{n+1}$ for $n+1$ elements, and $p_z = 0$ for the rest. Since our prices $\{\widehat{f}_z\}$ satisfy the constraints in the maximization problem above, we have shown $\sum_{i=1}^{n}\ell_i \leq n \ln n + (n+1)\ln(n+1)\ln n = O(n\ln^2n)$. 
\end{proof}

\section{Closing Remarks}
\label{sec:closing}

We gave the matroid intersection maintenance problem, where we
maintain a common base in the intersection of a partition matroid
$\cP$ and another arbitrary matroid $\cM$, where the parts of $\cP$
appear online, such that the total number of changes performed is
$O(n \log^2 n)$. This extends the previous result for the special case
of the intersection of two partition matroids (i.e., bipartite
matching). Our results were based on viewing the problem from the
perspective of market equilibria, and using market-clearing prices to
bound the lengths of augmenting paths. Several open problems remain:
the most natural one is whether we can improve the bound to
$O(n \log n)$, or give a better lower bound, even for the bipartite
matchings case. Can a better bound be given for the fractional variant
of the problem? Can the matroid constraints be generalized to broader
sets of packing constraints (again in the fractional case)? 

Finally, what can we say about intersections of two arbitrary
matroids? This problem is hopeless in full generality because it 
captures the edge-arrival model in bipartite matchings, which has 
a simple $\Omega(n^2)$ lower bound. Are there other interesting 
special cases which avoid these lower bounds?

\subsection*{Acknowledgments}

Part of this work was conducted while A.\ Gupta and S.\ Sarkar were visiting the Simons Institute for the Theory of Computing; they thank the institute for its generous hospitality. 

\appendix
\appendix

\section{Appendix}
\label{sec:appendix}
\MIisSUA*
\begin{proof}
In a sub-modular utility allocation (SUA) market, the utility of buyers are defined by sub-modular packing constraints. The convex program corresponding to this market is
\begin{equation}\label{eq:convex-prog-SUA}
\begin{aligned}
    \max\quad &\sum_{i \in B} m(i) \cdot \log u_i \\
     \quad &\sum_{i \in B'} u_i \leq \nu(B') \quad \forall B' \subseteq B && \quad (\alpha_{B'}) \\
    &u_{i} \geq 0
\end{aligned}
\end{equation}
where $\nu$ is a sub-modular, monotone function. We consider the following SUA market. Define for a subset of buyers $B' \subseteq B$ the sub-modular monotone function
\[ \nu(B') := \text{rank}(N(B')). \]
We will show that this SUA market and the matroid intersection market are equivalent, i.e., an optimal set of $y_e's$ can be transformed into an optimal solution in the SUA market, and vice versa. To see that a solution to the matroid intersection market can be turned into a solution to the SUA market with the same objective value, consider the following assignment of $u_i$'s: 
\[ u_i := \sum_{e \in P_i} y_e.\]
Clearly, the $u_i$'s are feasible, and the objective values are the
same. 

For the converse direction, we use the following theorem of McDiarmid, an extension of Rado's theorem
\begin{thm}[Proposition 2C of \cite{McDiarmid75}]
Let $G = (X, Y, E)$ be a bipartite graph, and $\mathbb{P}$ a polymatroid on ground set $Y$ associated to polymatroid function $\rho$. Let $u \in \RR^+_{X}$, then $u$ is linked onto some vector $y \in \mathbb{P}$ if and only if 
\begin{equation}\label{eq:rado}
    u(B') \leq \rho(N_G(B')) \qquad \forall B' \subseteq X. 
\end{equation}
\end{thm}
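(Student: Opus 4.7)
The plan is to prove the two directions of the equivalence separately. The forward direction ($\Rightarrow$) is a routine chain of inequalities. Given a non-negative linking $z$ on $E$ with row-sums $u$ and column-sums equal to some $y \in \mathbb{P}$, for any $B' \subseteq X$ we have
\[
    u(B') \;=\; \sum_{x \in B'} \sum_{j : (x,j) \in E} z_{xj} \;\leq\; \sum_{j \in N_G(B')} \sum_{x \in X} z_{xj} \;=\; \sum_{j \in N_G(B')} y_j \;\leq\; \rho(N_G(B')),
\]
where the final inequality uses $y \in \mathbb{P}$.

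For sufficiency ($\Leftarrow$), I would proceed in two phases. First, define $g : 2^X \to \mathbb{R}_+$ by $g(B) := \rho(N_G(B))$ and verify that $g$ is itself a polymatroid function. Monotonicity and $g(\emptyset) = 0$ are immediate; submodularity follows from combining submodularity of $\rho$, the identity $N_G(B_1 \cup B_2) = N_G(B_1) \cup N_G(B_2)$, and monotonicity of $\rho$ applied to the inclusion $N_G(B_1 \cap B_2) \subseteq N_G(B_1) \cap N_G(B_2)$. Under this reformulation the hypothesis $u(B') \leq \rho(N_G(B'))$ for all $B' \subseteq X$ is precisely the statement that $u$ lies in the polymatroid $P_g$ on $X$.

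The second phase is the heart of the proof: lifting $u \in P_g$ to an actual linking. I would formulate this as the LP feasibility problem asking for $z \geq 0$ on $E$ with $\sum_{j : (x,j) \in E} z_{xj} = u_x$ for every $x \in X$ and $\sum_{j \in S} \sum_{x : (x,j) \in E} z_{xj} \leq \rho(S)$ for every $S \subseteq Y$ (the latter family encoding membership of the column-sum vector in $\mathbb{P}$). Assuming infeasibility, Farkas' lemma yields sign-unrestricted multipliers $\lambda_x$ for $x \in X$ and non-negative multipliers $\alpha_S$ for $S \subseteq Y$ satisfying $\lambda_x \leq \sum_{S \ni j} \alpha_S$ on every edge $(x,j) \in E$ and $\sum_x \lambda_x u_x > \sum_S \alpha_S \rho(S)$. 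Using the standard uncrossing move for submodular $\rho$, i.e., replacing two crossing sets $S_1, S_2$ in the support of $\alpha$ by $S_1 \cup S_2$ and $S_1 \cap S_2$ without decreasing the dual gap, I would reduce the support of $\alpha$ first to a chain and then to a single critical set $S^* \subseteq Y$. The edge condition then forces $\lambda_x \leq 0$ whenever $N_G(x) \not\subseteq S^*$; taking $B' := \{x \in X : N_G(x) \subseteq S^*\}$ (or the positive level set of $\lambda$) yields $u(B') > \rho(S^*) \geq \rho(N_G(B'))$, contradicting the hypothesis.

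The hardest step will be the uncrossing / single-set reduction of the dual, because the LP has one constraint for every subset of $Y$ and extreme dual solutions are not a priori supported on a single set. A cleaner alternative that I would attempt in parallel is an induction on $|X|$: pick any $x \in X$, use $u_x \leq \rho(N_G(x))$ to allocate $z_{xj}$ on the star at $x$, replace $\mathbb{P}$ by the residual polymatroid on $Y$ obtained after subtracting the routed amounts, and verify that the expansion hypothesis still holds on $X \setminus \{x\}$. The delicate point there is choosing the allocation at $x$ so that no $B' \subseteq X \setminus \{x\}$ loses its expansion slack; this is once again where submodularity of $\rho$ and the base-exchange structure of $\mathbb{P}$ enter.
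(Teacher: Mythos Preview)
The paper does not prove this statement at all: it is quoted verbatim as Proposition~2C of McDiarmid~\cite{McDiarmid75} and used as a black box inside the proof of \Cref{lem:MIisSUA}. There is therefore no ``paper's own proof'' to compare your proposal against.

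That said, your sketch is a reasonable outline of a proof. The forward direction is fine. For the reverse direction, your Farkas/uncrossing plan is standard for polymatroid-type results, but the step ``reduce the support of $\alpha$ \ldots\ to a single critical set $S^*$'' is stated too strongly: uncrossing gets you a chain, not a single set, and from a chain you must still pass to level sets of the induced $\lambda$ to extract the violating $B'$. This is routine once stated correctly. Your inductive alternative is also viable, but the ``delicate point'' you flag (choosing the allocation at $x$ so that residual expansion is preserved) is exactly where all the work lies, and you have not said how you would do it; the cleanest fix is to appeal to polymatroid intersection (the system $\{z \geq 0 : \text{row-sums} \leq u,\ \text{column-sums in } \mathbb{P}\}$ is the intersection of two polymatroids, and its max total value is $\min_{B'} \big(u(X \setminus B') + \rho(N_G(B'))\big)$), which immediately gives the result under your hypothesis.
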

In McDiarmid's language, any solution $u$ in the SUA market (and therefore satisfying~(\ref{eq:rado})) is ``linked to'' some $y$ in the polymatroid defined by polymatroid function $\rho(A) = \rank_{\mc{M}}(A)$ (so $\mathbb{P}$ is the matroid polytope of $\mc{M}$). This precisely gives us a solution of $y_e$'s that lie in the matroid polytope of $\mc{M}$ such that $u_i = \sum_{e \in P_i} y_e$.
\end{proof}

\algWellDefined*

\begin{proof}
In particular, we have the following:
\begin{enumerate}[label=(\alph*),nosep]
    \item Each $B_{\ell+1}$ is uniquely defined.

    \item At each step $\ell$, every remaining buyer $i \in B^{\text{rem}}_\ell$ has a nonempty neighborhood. That is, $P_i \setminus E_\ell \neq \varnothing$. 
    
    \item The prices assigned in step $\ell$ are greater than those assigned in later steps. That is, $\alg{p}_\ell > \alg{p}_{\ell + 1}$. 
\end{enumerate}

First, we state a fact that will be useful:
\begin{fact}
\label{fact:fractions}
For non-negative $a, b, c, d, p$, if $\frac{a}{b} \leq p$ and $\frac{c}{d} \leq p$, then $\frac{a + c}{b + d} \leq p$. Moreover, if either of the first two inequalities are strict, then so is the third.
\end{fact}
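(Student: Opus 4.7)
The plan is to reduce the fractional inequalities to linear ones by clearing denominators, then add them. Specifically, I would first interpret $a/b \leq p$ as $a \leq pb$ (and likewise $c \leq pd$); this interpretation is unambiguous for $b, d > 0$, and it is the natural convention to adopt when $b$ or $d$ equals zero (since in the application to \Cref{alg:cap} the denominators are matroid ranks of neighborhoods, and a zero rank forces the corresponding numerator---a buyer budget---to also be zero for the fraction to satisfy any finite bound). With this interpretation in hand, the first step is simply to add the two inequalities $a \leq pb$ and $c \leq pd$ to obtain
\[
  a + c \;\leq\; p\,(b + d).
\]

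The second step is to divide through by $b + d$, provided $b + d > 0$, to conclude $\frac{a+c}{b+d} \leq p$. The edge case $b + d = 0$ forces $b = d = 0$ (since both are nonnegative), and then by the convention above $a = c = 0$, so the claimed inequality holds vacuously under the same convention.

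For the strictness addendum, I would observe that if (say) $a < pb$ while $c \leq pd$, then summing preserves the strict inequality: $a + c < p(b+d)$, and dividing by $b+d$ (which must be positive since we needed $b > 0$ for $a < pb$ to be strict) yields $\frac{a+c}{b+d} < p$. The symmetric case is identical.

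I do not anticipate any real obstacle; the only subtlety worth flagging in the writeup is making the degenerate-denominator convention explicit, so that the fact as used inside the uniqueness/strict-monotonicity arguments for \Cref{alg:cap} does not silently divide by zero.
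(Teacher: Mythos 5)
Your proof is correct: clearing denominators ($a \le pb$, $c \le pd$), adding, and dividing by $b+d$ is the standard mediant-inequality argument, and your handling of strictness (noting $a < pb$ forces $b > 0$, hence $b+d>0$) is sound. The paper states \Cref{fact:fractions} without proof, treating it as immediate, so your write-up simply supplies the same implicit argument; the only remark is that in the paper's use of the fact (inside the proof of \Cref{lem:algWellDefined}) the denominators are ranks of nonempty, non-spanned neighborhoods and hence positive, so your zero-denominator convention, while harmless, is not actually needed there.
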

\begin{enumerate}[label=(\alph*)]
    \item Let $B', B''$ be two sets of buyers at iteration $\ell$ with maximum inverse expansion $\rho$. We claim that $B' \cup B''$ has the same inverse expansion. Observe that
    \begin{align*}
        m(B' \cup B'') &\leq \rank^{(\ell)}(N(B') \cup N(B'')) \cdot \rho \\
        &\leq \left(\rank^{(\ell)}(N(B')) +\rank^{(\ell)}(N(B'')) - \rank^{(\ell)}(N(B') \cap N(B'')) \right) \cdot \rho\\
        &\leq m(B') + m(B'') -m(B' \cap B'')\\
        &= m(B' \cup B''),
    \end{align*}
    where the first inequality follows from maximality of $\rho$, the second by submodularity, and the third by our assumption on the inverse expansion of $B'$ and $B''$. Therefore equality holds throughout, and $\InvExp^{(\ell)}(B' \cup B'') = \rho$. So the union of all sets of maximum inverse expansion gives the unique largest set of maximum inverse expansion. 
    
    \item Consider for contradiction the first iteration $\ell+1$ at which some remaining buyer $i \in B^{\text{rem}}_{\ell+1}$ has empty neighborhood. That means $i \not \in B_{\ell+1}$, while its neighbors $P_i \setminus E_{\ell}$ are in $\Span_{\mc{M}_{\ell}}(N(B_{\ell + 1}))$. But then we could add $i$ to $B_{\ell + 1}$ without increasing the rank of its neighborhood. That is $ \rank^{(\ell)}(N(B_{\ell + 1})) = \rank^{(\ell)}(N(B_{\ell + 1} \cup i))$, and thus $\InvExp^{(\ell)}(B_{\ell + 1} \cup i) \geq \InvExp^{(\ell)}(B_{\ell + 1}) = \rho$, a contradiction to the maximality of $B_{\ell+1}$.

    \item For contradiction, suppose that $\alg{p}_{\ell + 1} > \alg{p}_\ell$. We will argue that $B_{\ell}$ is not the maximum inverse expansion set in $\mc{M}_{\ell-1}$. Since $B_{\ell}$ and $B_{\ell + 1}$ are disjoint, we have
    \[
        \InvExp^{(\ell-1)}(B_\ell \cup B_{\ell + 1}) = \frac{m(B_\ell) + m(B_{\ell + 1})}{\rank^{(\ell - 1)}(N(B_\ell) \cup N(B_{\ell + 1}))}.
    \]
    Furthermore, recall that $\mc{M}_\ell = \mc{M} / E_{\ell} = \mc{M}_{\ell-1} / \Span_{\mc{M}_{\ell-1}}(N(B_\ell))$, so by the definition of matroid contraction, 
    \[ \rank^{(\ell)}(N(B_{\ell+1})) + \rank^{(\ell - 1)}(N(B_\ell))= \rank^{(\ell - 1)}(N(B_\ell) \cup N(B_{\ell + 1})). \]
    We use this substitution to get our second equality:
    \begin{align*}
        \InvExp^{(\ell-1)}(B_\ell \cup B_{\ell + 1}) &= \frac{m(B_\ell) + m(B_{\ell + 1})}{\rank^{(\ell - 1)}(N(B_\ell) \cup N(B_{\ell + 1}))}  \\
        &= \frac{m(B_\ell) + m(B_{\ell + 1})}{\rank^{(\ell - 1)}(N(B_\ell)) + \rank^{(\ell)}( N(B_{\ell + 1}))}\\
        &> \alg{p}_\ell.
    \end{align*}
    The last strict inequality follows from \Cref{fact:fractions} and our assumption that $\alg{p}_{\ell + 1} > \alg{p}_\ell$. This contradicts that $\alg{p}_\ell$ is the maximum inverse expansion in $\mc{M}_{\ell -1}$. \qedhere
\end{enumerate}
\end{proof}

\subsection{The Gluing Lemma}
\label{sec:gluing-lemma}

Fix an MI market $\Market$, and let $B_1, B_2, \ldots, B_L$ be the sets of buyers peeled off in \Cref{alg:cap}. Fix some $1 \leq r < L$, and define $B_{\leq r} := \bigcup_{\ell \leq r} B_\ell$, and $B_{> r} := B \setminus B_{\leq r}$. Now we define two MI markets:
\begin{itemize}[nosep]
    \item market $\Market^{(1)}$ on the ground set $E^{(1)}:= N(B_{\leq r})$, with matroid $\mc{M}^{(1)} := \mc{M}|_{E^{(1)}}$, and partition matroid $\mc{P}|_{E^{(1)}}$. The buyers are $B_{\leq r}$, and their budgets are the same as in market $\Market$. Let $y^{(1)}, \alg{\alpha}^{(1)}$ be corresponding optimal primal and dual solutions, where $\alg{\alpha}^{(1)}$ is the dual solution from the MI skeleton on market $\Market^{(1)}$.

    \item market $\Market^{(2)}$ on the ground set $E^{(2)}:= E \setminus \Span(E^{(1)})$, with matroid $\mc{M}^{(2)} := \mc{M} / \Span(E^{(1)})$, and partition matroid $\mc{P}|_{E^{(2)}}$. The buyers are $B_{> r}$, and their budgets are the same as in market $\Market$. Let $y^{(2)}, \alg{\alpha}^{(2)}$ be corresponding optimal primal and dual solutions, where $\alg{\alpha}^{(2)}$ is the dual solution from the MI skeleton on market $\Market^{(2)}$.  
\end{itemize}
Additionally, let $\alg{p}_{r+1}$ be the price assigned in step $r+1$ of \Cref{alg:cap} for market $\Market$.
\begin{lem}[Gluing Lemma]
\label{lem:gluing}
Let $y^{+}$ be
\begin{align*}
y^{+}_e :=
    \begin{cases}
      y^{(1)}_e  & \text{if $e \in E^{(1)}$} \\
      0 & \text{if $e \in \Span(E^{(1)}) \setminus E^{(1)}$} \\
      y^{(2)}_e & \text{otherwise}
    \end{cases}
\end{align*}
and $\alpha^{+}$ be
\begin{align*}
\alpha^{+}_S :=
    \begin{cases}
      \alg{\alpha}^{(1)}_{S'} & S = \Span(S')  \text{ for } S' \subsetneq E^{(1)} \text{ with } \alg{\alpha}_{S'}^{(1)} > 0 \\
      \alg{\alpha}^{(1)}_{E^{(1)}} - \alg{p}_{r+1} & S = \Span(E^{(1)})\\
      \alg{\alpha}^{(2)}_{S'}  & S = S' \cup \Span(E^{(1)}) \text{ for } S' \subseteq E^{(2)} \text{ with } \alg{\alpha}_{S'}^{(2)} > 0 \\
      0 & \text{otherwise}.
    \end{cases}
\end{align*}
Then, $y^{+}$ is an optimal solution to the convex program (\ref{eq:EG2}) for market instance
$\Market^+$, identical to market $\Market$. And, in particular, the
prices defined by $\alpha^+$ are identical to those assigned in markets
$\Market^{(1)}$ and $\Market^{(2)}$. 

    Moreover, the lemma still holds if the budgets of buyers in $B_{\leq r}$ are allowed to increase from their original values in market $\Market$, with the increase happening simultaneously in $\Market^{(1)}$ and $\Market^+$.
\end{lem}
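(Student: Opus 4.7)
The plan is to verify that $(y^+, \alpha^+)$ satisfies the KKT optimality conditions \eqref{eq:KKT1} and \eqref{eq:KKT2} for \eqref{eq:EG2} on market $\Market^+$. The argument decomposes into primal feasibility of $y^+$, dual feasibility of $\alpha^+$, and matching of the glued prices against stationarity and complementary slackness.

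For primal feasibility, I would verify $\sum_{e \in S} y^+_e \leq \rank_{\mc{M}}(S)$ for every $S \subseteq E$ by partitioning $S$ across $E^{(1)}$, $\Span(E^{(1)}) \setminus E^{(1)}$, and $E^{(2)}$. Since $y^+$ vanishes on the middle piece, the bound reduces to combining feasibility of $y^{(1)}$ in $\mc{M}|_{E^{(1)}}$ with feasibility of $y^{(2)}$ in $\mc{M}/\Span(E^{(1)})$, using the standard identity $\rank_{\mc{M}}(X) = \rank_{\mc{M}}(Y) + \rank_{\mc{M}/Y}(X \setminus Y)$ applied with $Y = \Span(E^{(1)})$. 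For dual feasibility, the only non-obvious inequality is $\alg{\alpha}^{(1)}_{E^{(1)}} - \alg{p}_{r+1} \geq 0$: the value $\alg{\alpha}^{(1)}_{E^{(1)}}$ is the price assigned in the last step of \Cref{alg:cap} on $\Market^{(1)}$, which coincides with the step-$r$ price $\alg{p}_r$ from \Cref{alg:cap} on $\Market$, and $\alg{p}_r > \alg{p}_{r+1}$ by \Cref{lem:algWellDefined}.

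The heart of the proof is showing $p^+_e = p^{(1)}_e$ for $e \in E^{(1)}$ and $p^+_e = p^{(2)}_e$ for $e \in E^{(2)}$, so that KKT transfers from the sub-markets. Two observations drive this. First, each positive-dual set $S'$ from the MI skeleton on $\Market^{(1)}$ is closed in $\mc{M}^{(1)}$, so $\Span_{\mc{M}}(S') \cap E^{(1)} = S'$; hence for $e \in E^{(1)}$ the set $\Span(S')$ contributes to $p^+_e$ iff $S'$ contributes to $p^{(1)}_e$. Second, since every $e \in E^{(1)}$ lies in $\Span(E^{(1)})$, it receives contributions from every set $S' \cup \Span(E^{(1)})$ with $\alg{\alpha}^{(2)}_{S'} > 0$; the $\alg{\alpha}^{(2)}$-values telescope (as in \Cref{lem:kkt-comb-decomp}) to the \emph{largest} price in $\Market^{(2)}$, which equals $\alg{p}_{r+1}$ because \Cref{alg:cap} on $\Market^{(2)}$ resumes exactly where its run on $\Market$ left off. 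This extra $+\alg{p}_{r+1}$ cancels the $-\alg{p}_{r+1}$ embedded in $\alpha^+_{\Span(E^{(1)})}$, yielding $p^+_e = p^{(1)}_e$. For $e \in E^{(2)}$ the argument is simpler: $e \notin \Span(E^{(1)})$, so only the terms $S' \cup \Span(E^{(1)})$ with $S' \ni e$ contribute, giving $p^{(2)}_e$.

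Once prices match, \eqref{eq:KKT1} is immediate because each buyer lies in exactly one sub-market with its sub-market allocation and sees the same price, while \eqref{eq:KKT2} reduces to tightness of positive-dual constraints in the sub-markets together with $\rank_{\mc{M}}(\Span(S')) = \rank_{\mc{M}}(S')$ and the rank-contraction identity. For the ``moreover'' clause, the argument only requires that $(y^{(1)}, \alg{\alpha}^{(1)})$ is KKT-optimal for $\Market^{(1)}$ under its current budgets together with $\alg{\alpha}^{(1)}_{E^{(1)}} \geq \alg{p}_{r+1}$; both are preserved under budget increases in $B_{\leq r}$, the first by assumption and the second because raising budgets inside $\Market^{(1)}$ only increases the inverse expansion of every subset of $B_{\leq r}$ and hence pushes the smallest MI-skeleton price upwards. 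The main obstacle I anticipate is the bookkeeping across the three element categories $E^{(1)}$, $\Span(E^{(1)}) \setminus E^{(1)}$, and $E^{(2)}$, and keeping the span/closure directions straight throughout the price computation.
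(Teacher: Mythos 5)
Your approach is essentially the paper's: verify the KKT conditions for $(y^+,\alpha^+)$ by decomposing into primal feasibility (via the rank identity across the contraction), dual feasibility (reducing to $\alg{\alpha}^{(1)}_{E^{(1)}} \geq \alg{p}_{r+1}$, settled by $\alg{p}_r > \alg{p}_{r+1}$ and, for the ``moreover'' clause, competition monotonicity), and a price-matching computation where the $\alg{\alpha}^{(2)}$-duals telescope to $\alg{p}_{r+1}$ and cancel the $-\alg{p}_{r+1}$ baked into $\alpha^+_{\Span(E^{(1)})}$. All of that is correct and lines up with the paper's proof.

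There is one spot where ``once prices match, \eqref{eq:KKT1} is immediate'' is too quick, and it is not pure bookkeeping. Elements $e \in \Span(E^{(1)}) \setminus E^{(1)}$ belong to a buyer $i \in B_{>r}$, so $i$ lives in $\Market^{(2)}$, but $e$ itself is in \emph{neither} sub-market; hence the stationarity inequality $p^+_e \geq 1/\sum_{e' \in P_i} y^+_{e'}$ cannot be inherited from $\Market^{(1)}$ or $\Market^{(2)}$ and must be argued directly. The paper does this by noting $y^+_e = 0$ (so complementary slackness is free), and then $p^+_e \geq \alg{\alpha}^{(1)}_{E^{(1)}} > \alg{p}_{r+1} \geq q_i = 1/\sum_{e'\in P_i} y^+_{e'}$, where the last equality uses that $P_i$ is disjoint from $E^{(1)}$ and $y^+$ vanishes on $\Span(E^{(1)})\setminus E^{(1)}$. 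You should add this case explicitly; with it, your argument is complete and matches the paper's.
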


\begin{proof}
It suffices to check that $y^+$ and $\alpha^+$ satisfy the KKT conditions for market $\Market^+$. 

\paragraph{Primal Feasibility:} We first verify primal feasibility: for any $S \subseteq E$, we have 
\[
    \sum_{e \in S} y_e^+ = \sum_{e \in S \cap E^{(1)}} y_e^{(1)} + \sum_{e \in S \cap E^{(2)}} y_e^{(2)} \leq \rank_{\mc{M}^{(1)}}(S \cap E^{(1)}) + \rank_{\mc{M}^{(2)}}(S \cap E^{(2)}) \leq \rank(S)
\]
where the first inequality follows from feasibility of $y^{(1)}$ and $y^{(2)}$ in their respective markets, and the second from the definition of matroid contraction. 

\paragraph{Dual Feasibility:}
For all $S \neq \Span(E^{(1)})$, it is clear that $\alpha_S^+ \geq 0$,
so dual feasibility holds. It remains to show for $S =
\Span(E^{(1)})$. We first observe that $\sum_{S' \subseteq E^{(2)}}
\alpha_{S'}^{(2)} = \alg{p}_{r+1}$, since \Cref{alg:cap} on market
$\Market^{(2)}$ is identical to steps $r+1, \ldots, L$ in market
$\Market$ (by construction). Moreover, $\alg{\alpha}^{(1)}_{E^{(1)}}$
is the minimum price of any element in market $\Market^{(1)}$. In the
case that budgets are not increased, $\alg{\alpha}^{(1)}_{E^{(1)}}$ is clearly equal to
$\alg{p}_r$, since the steps \Cref{alg:cap} takes for market $\Market^{(1)}$ are the same as the first $r$ steps on market $\Market$. 

If budgets \emph{have} been increased, then still we have $\alg{\alpha}^{(1)}_{E^{(1)}} \geq \alg{p}_r$, by the monotonicity of prices given in \Cref{lem:competition-monotonicity}. In either case, 
\[
    \alg{\alpha}^{(1)}_{E^{(1)}} \geq \alg{p}_r > \alg{p}_{r+1}.
\]
So $\alpha_{\Span(E^{(1)})}^+ = \alg{\alpha}^{(1)}_{E^{(1)}} - \alg{p}_{r+1} \geq 0$, and dual feasibility is satisfied. 

\paragraph{KKT Stationarity Condition (\cref{eq:KKT1}):}
Let use define
\[ p^+_e := \sum_{S \ni e} \alpha^+_S. \]
and likewise $\alg{p}^{(1)}_e = \sum_{S \ni e} \alg{\alpha}^{(1)}_S$ for $e \in E^{(1)}$ and $\alg{p}^{(2)}_e = \sum_{S \ni e} \alg{\alpha}^{(2)}_S$ for $e \in E^{(2)}$. We first observe that $p^+_e = \alg{p}_e^{(1)}$ for every $e \in E^{(1)}$, and likewise for $e \in E^{(2)}$. This is immediate for every $e \in E^{(2)}$, by the definition of $\alpha^+$. Moreover, since $\sum_{S' \subseteq E^{(2)}} \alpha_{S'}^{(2)} = \alg{p}_{r+1}$, we have that for $e \in E^{(1)}$
\[
    p^+_e = \sum_{S \ni e} \alpha^+_S = p_{r+1} + \sum_{\substack{S \ni e\\S \subseteq \Span(E^{(1)})}} \alpha^+_S = \sum_{\substack{S' \ni e\\S' \subseteq E^{(1)}}} \alg{\alpha}_{S'}^{(1)} = \alg{p}_e^{(1)}
\]
This implies that $\alpha^+$ and $y^+$ satisfy the stationarity condition for $e \in E^{(1)} \cup E^{(2)}$, since $y^{(1)}$, $\alg{\alpha}^{(1)}$, $y^{(2)}$, $\alg{\alpha}^{(2)}$ are optimal and satisfy the stationarity conditions.

Lastly, we must check whether edges $e \in \Span(E^{(1)}) \setminus E^{(1)}$  satisfy the stationarity condition. Say edge $e$ belongs to buyer $i$'s part. By definition, buyer $i$ is a part of market $\Market^{(2)}$. By the the optimality of $y^{(2)}$ and $\alg{\alpha}^{(2)}$ in market $\Market^{(2)}$, we know the price buyer $i$ is purchasing at is precisely equal to 
\[ q_i = \frac{1}{\sum_{e' \in P_i} y^+_e}. \] 
Since buyer $i$ is a part of market $\Market^{(2)}$, we have $q_i \leq \alg{p}_{r + 1}$. By monotonicity of prices, all prices in market $\Market^{(1)}$ are at least $\alg{p}_{r + 1}$. Notice that, by construction, $p_e^+$ has the same price as some element in $E^{(1)}$. Hence, 
\[p^+_e > \alg{p}_{r + 1} \geq q_i = \frac{1}{\sum_{e' \in P_i} y^+_e}. \]
Note that the second part of (\cref{eq:KKT1}) is trivially satisfied since $y^+_e = 0$.

\paragraph{KKT Complementary Slackness (\cref{eq:KKT2}):}
We show that sets with non-zero $\alpha^+_S$ value are tight under $y^+_e$. We go through the cases on what values $\alpha^+_S$ could take. If $\alpha^+_S = \alg{\alpha}^{(1)}_{S'}$  for some $S = \Span(S')$ where $\alg{\alpha}^{(1)}_{S'} > 0$ (case 1), then, 
\[ \sum_{e \in S'} y^{(1)}_e = \rank(S') \]
by complementary slackness on market $\Market^{(1)}$. Therefore, 
\begin{align*}
    \sum_{e \in S} y^+_e &= \sum_{e \in S} y^{(1)}_e + \sum_{e \in S \setminus S'} 0  \\
    &= \rank(S') = \rank(S).
\end{align*}
Similar logic holds for the case $S = \Span(E^{(1)})$ (case 2). Lastly, if $\alpha^+_S = \alg{\alpha}^{(2)}_{S'}$ for some $S = S' \cup \Span(E^{(1)})$, then, since $S'$ is disjoint from $\Span(E^{(1)})$ and $\Span(E^{(1)})$ is a tight set under $y^+_e$, we have 
\begin{align*}
    \sum_{e \in S} y^+_e &= \sum_{e \in S'} y^+_e +  \sum_{e \in \Span(E^{(1)})} y^+_e  \\
    &= \sum_{e \in S'} y^{(2)}_e + \sum_{e \in \Span(E^{(1)})} y^+_e \\
    &= \rank(S') + \rank(E^{(1)}) \\
    &\geq \rank(S' \cup \Span(E^{(1)})) 
\end{align*}
which means $\sum_{e \in S} y^+_e = \rank(S' \cup \Span(E^{(1)}))$ as desired. 
\end{proof}

{\small
\bibliographystyle{alpha}
\bibliography{bib,scheduling}
}

\end{document}